\numberwithin{equation}{section}
\def\be{\begin{equation}}
\def\ee{\end{equation}}
\newcommand{\eq}{\begin{equation}}
\newcommand{\eeq}{\end{equation}}
\DeclareMathOperator*{\argmax}{argmax}
\DeclareMathOperator{\dist}{dist}
\newcommand{\virg}[1]{``#1''}
\newcommand{\R}{\mathbb R}
\newcommand{\N}{\mathbb N}
\newcommand{\Z}{\mathbb Z}
\newcommand{\C}{\mathbb C}
\newcommand{\HH}{\mathbb H}
\newcommand{\E}{\mathbb E}
\newcommand{\Prob}{\mathbb P}
\newcommand{\1}{\mathds 1}
\renewcommand{\i}{\mathrm i}
\newcommand{\dd}{\mathrm d}
\renewcommand{\d}{\mathrm d}
\newcommand{\eps}{\varepsilon}
\renewcommand{\rho}{\varrho}
\newcommand{\DD}{\mathscr D}
\newcommand{\xv}{\boldsymbol{x}}
\newcommand{\Exv}{\E_{\xv}}
\newcommand{\Ex}{\E_x}
\newcommand{\Exi}{\E_\xi}
\newcommand{\Him}{H} 
\newcommand{\Zim}{Z} 
\newcommand{\muim}{\mu} 
\newcommand{\pim}{p} 
\newcommand{\mim}{m} 
\newcommand{\vpim}{\widetilde{p}\,}
\newcommand{\Zcl}{Z^{\textup{(0)}}}
\newcommand{\hx}{\hat{x}}
\newcommand{\xn}{\hx_n}
\newcommand{\poo}{p^{(0)}}
\newcommand{\ZclN}{Z^{(0)}_{N}}
\newcommand{\pclN}{p^{(0)}_{N}}
\newcommand{\muclN}{\mu^{(0)}_{N}}
\newcommand{\bmat}{\begin{pmatrix}}
\newcommand{\emat}{\end{pmatrix}}
\journalname{}
\begin{document}

\title{Mean-Field Monomer-Dimer models. A review.}

\author{Diego Alberici \and Pierluigi Contucci \and Emanuele Mingione }

\institute{
			University of Bologna - Department of Mathematics \\
            Piazza di Porta San Donato 5, Bologna (Italy)\\
            \email{diego.alberici2@unibo.it, pierluigi.contucci@unibo.it, emanuele.mingione2@unibo.it}
            }

\date{}


\maketitle
\vskip .5truecm
\rightline{\it To Chuck Newman, on his 70th birthday}
\vskip 1truecm
\begin{abstract}
A collection of rigorous results for a class of mean-field monomer-dimer models is presented.
It includes a Gaussian representation for the partition function that is shown to considerably simplify
the proofs. The solutions of the quenched diluted case and the random monomer case are explained.
The presence of the attractive component of the Van der Waals potential is considered and the coexistence phase
coexistence transition analysed. In particular the breakdown of the central limit theorem
is illustrated at the critical point where a non Gaussian, quartic exponential distribution is found
for the number of monomers centered and rescaled with the volume to the power $3/4$.
\end{abstract}


\section{Introduction}
The monomer-dimer models, an instance in the wide set of interacting particle systems,
have a relevant role in equilibrium statistical mechanics. They were introduced
to describe, in a simplified yet effective way, the process of absorption of
monoatomic or diatomic molecules in condensed-matter
physics \cite{Rob,Ch,Chcambr} or the behaviour of liquid solutions composed by molecules of different sizes \cite{FR}.
Exact solutions in specific cases (e.g. the perfect matching problem) have been obtained on planar lattices \cite{F,K,TFK,Ltra,GJL} and the problem on regular lattices is also interesting for the liquid crystals modelling \cite{O,HLliquid,DG,Alb,GL}.
The impact and the interest that monomer-dimer models have attracted has progressively grown
beyond physics. Their thermodynamic
behaviour has indeed proved to be useful in computer science for the matching
problem \cite{KS, BN} or for the applications of statistical physics methods to the social
sciences \cite{BCSV}.

From the physical point of view monomers and
dimers cannot occupy the same site of a lattice due to the strong repulsion
generated by the Pauli exclusion principle. Beside this interaction though, as already
noticed by Peierls \cite{Pei} in the first theoretical
physics accounts, the attractive component of the Van der Waals potentials might
influence the phase structure of the model and its thermodynamic behaviour.
With the contemporary presence of those two interactions the
global physical observables become particularly difficult to study. Generic Gaussian
fluctuations on each ergodic component can still be expected but the nature of the
critical point, if any, is a priori not obvious.

Here we focus on a set of monomer-dimer models in the mean field setting,
i.e. on the complete graph where every site interacts in average with any
other, and present a review of recent results. Section 2 introduces the general properties of the monomer-dimer
systems that we approach with the help of a Gaussian representation for their
partition function. This representation and its combinatorial features help to embed
and ease part of the classical difficulties of their studies. The celebrated Heilmann and Lieb relation, so rich
of rigorous consequences, emerges as the formula of integration by parts for Gaussian random vectors.
The absence of phase transition for the pure hard-core case is therefore
derived. Section 3 treats two quenched cases, namely the diluted
complete graph of Erd\H{o}s-R\'enyi type as well as the diluted random
monomer field activity. For both cases we compute the exact solution. The diluted
graph is treated with the help of correlation inequalities and the representation
of the monomer density as the solution of an iterative distributional equation.
The random monomer activity model is solved by reducing the computation
of the equilibrium state to a standard variational problem, again with the help of
the Gaussian representation. Section 4 introduces a genuine deterministic
mean field model with and without the attractive interaction. It is shown how
the model with attraction displays a phase space structure similar to the
mean field ferromagnet but without the usual plus-minus symmetry. The model
has a coexistence line bounded by a critical point with standard mean-field critical exponents.
In Section 5 we show that while outside
the critical point the central limit theorem holds, at criticality it breaks down and
the limiting distribution is found at a scale of $N^{3/4}$ and turns out to be
a quartic exponential, like in the well known results by Newman and Ellis \cite{ellis1978statistics,ellis1978limit}
for the ferromagnet.


\section{Definition and general properties} \label{sec: definitions}
Let $G=(V,E)$ be a finite undirected graph with vertex set $V$ and edge set $E\subseteq\{ij\equiv\{i,j\}\,|\, i\in V,\,j\in V,\,i\neq j\}\,$.

\begin{definition}[Monomer-dimer configurations]
A set of edges $D\subseteq E$ is called a \textit{monomer-dimer configuration}, or a \textit{matching}, if the edges in $D$ are pairwise non-incident.
The space of all possible monomer-dimer configurations on the graph $G$ is denoted by $\DD_G$.
\end{definition}

Given a monomer-dimer configuration $D$, we say that every edge in $D$ is occupied by a \textit{dimer}, while every vertex that does not appear in $D$ is occupied by a \textit{monomer}.
The set of monomers associated to $D$ is denoted by $M(D)$.

\begin{figure}[h]
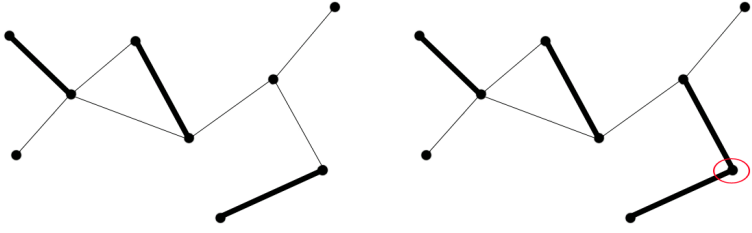

\centering
\includegraphics[scale=0.2]{Ch1-1.eps} \quad
\includegraphics[scale=0.2]{Ch1-11.eps}
\caption{The bold edges in the left figure form a monomer-dimer configuration on the graph, while those in the right figure do not because two of them share a vertex.}
\end{figure}

\begin{remark} \label{rk0: alpha}
We can associate the dimer occupation variable $\alpha_{ij}\in\{0,1\}$ to each edge $ij\in E\,$: the edge $ij$ is occupied by a dimer if and only if $\alpha_{ij}$ takes the value $1$.
It is clear that monomer-dimer configurations are in one-to-one correspondence with vectors $\alpha\in\{0,1\}^E$ satisfying the following constraint:
\eq \label{eq0: hard-core}
 \sum_{j\sim i}\,\alpha_{ij} \,\leq\, 1 \,,\quad\forall\,i\in V
\eeq
where $j\sim i$ means that $ij\in E$.
Therefore, with a slight abuse of notation, we denote by $\DD_G$ also the set of $\alpha\in\{0,1\}^E$ that satisfy eq. \eqref{eq0: hard-core}.
The condition \eqref{eq0: hard-core} guarantees that at most one dimer can be incident to a given vertex $i$, namely two dimers cannot be incident. This fact is usually referred as \textit{hard-core interaction} or \textit{hard-core constraint} or \textit{monogamy constraint}.
We also introduce an auxiliary variable, the monomer occupation variable,
\eq \label{eq0: alphai}
\alpha_i:=1-\sum_{j\sim i}\alpha_{ij}\in\{0,1\}
\eeq
for each vertex $i\in V$: the vertex $i$ is occupied by a monomer if and only if $\alpha_i$ takes the value $1$.
\end{remark}

The definition of monomer-dimer configurations already allows to raise non-trivial combinatorial questions as ``How many monomer-dimer configurations, for a fixed number of dimers, exist on given a graph $G$?''.
This combinatorial problem is known to be NP-hard in general, but there are polynomial algorithms and exact solutions for specific cases \cite{K, F, TFK, KS, HL}.
In Statistical Mechanics a further structure is introduced and the previous problem becomes a specific limit case. We consider a Gibbs probability measure on the set of monomer-dimer configurations. There are several choices for the measure, depending on how we decide to model the interactions in the system.

\subsection{Pure hard-core interaction} \label{sec: hard-core}
This amounts to take into account only the hard-core interaction among particles and assign a dimer activity $w_{ij}\geq0$ to each edge $ij\in E$ and a monomer activity $x_i>0$ to each vertex $i\in V$.

\begin{definition}[Monomer-dimer models with pure hard-core interaction] \label{df0: md model}
A \textit{ pure monomer-dimer model} on $G$ is given by the following probability measure on $\DD_G$:
\eq \label{eq0: measure}
\mu_{G}(D) \,:=\, \frac{1}{Z_G}\; \prod_{ij\in D}w_{ij} \prod_{i\in M(D)}\!\!x_i \quad\forall\,D\in\DD_G \;,
\eeq
where the normalizing factor, called partition function, is
\eq \label{eq0: pf}
Z_G := \sum_{D\in\DD_G} \prod_{ij\in D}w_{ij} \prod_{i\in M(D)}\!\!x_i \;.
\eeq
We denote by $\langle\,\cdot\,\rangle_G$ the expectation w.r.t. the measure $\mu_G$.
The dependence of the measure on the activities $w_{ij},\,x_i$ is usually implicit in the notations.
\end{definition}

\begin{remark} \label{rk0: hamiltonian}
Equivalently, one can think  the measure \eqref{eq0: measure} as a Gibbs measure on the space of  occupancy variables $\alpha$ (see Remark \ref{rk0: alpha}), namely
\eq\label{eq0: measure-hami}
\mu_G(\alpha)\,=\, \frac{1}{Z_G}\,e^{-H_G(\alpha)} \quad\forall\,\alpha\in\DD_G \;,
\eeq
where
\eq \label{eq0: hamilt}
H_G(\alpha) \,:=\, -\sum_{ij\in E}h_{ij}\alpha_{ij} \,-\, \sum_{i\in V}h_i\alpha_i \quad\forall\,\alpha\in\DD_G\;
\eeq
is the \textit{Hamiltonian function} and  $h_{i}:=\log x_i$, $h_{ij}:=\log w_{ij}$ are called \textit{monomer, dimer field} respectively. The partition function \eqref{eq0: pf} rewrites
\eq \label{eq0: pf-hami}
Z_G = \sum_{\alpha\in\DD_G} \exp\big(\,\sum_{ij\in E}h_{ij}\alpha_{ij} + \sum_{i\in V}h_i\alpha_i\big)\;.
\eeq
\end{remark}

\begin{remark} \label{rk0: riduz-coeff}
It is worth to notice that the definition \ref{df0: md model} is redundant for two reasons.
First one can consider without loss of generality monomer-dimer models on complete graphs only: a monomer-dimer model on the graph $G=(V,E)$ coincides with a monomer-dimer model on the complete graph with $N=|V|$ vertices, by taking $w_{ij}=0$ for all pairs $ij\notin E$. In this case we denote the partition function \eqref{eq0: pf-hami} with $Z_N$. Secondly, one can set without loss of generality all the monomer activities equal to $1$: the monomer-dimer model with activities $(w_{ij},x_i)$ coincides with the monomer-dimer model with activities $(\frac{w_{ij}}{x_ix_j},1)$, since the relation
\[ \prod_{i\in M(D)}\!\!x_i \,=\, \bigg(\prod_{i\in V}x_i\bigg) \prod_{ij\in D}\frac{1}{x_i\,x_j} \;.\]
shows that the partition function is multiplied by an overall constant and therefore
the probability measure is left unchanged. The same argument shows also that if the dimer activity is uniform on the graph then it can be set equal to $1$: the monomer-dimer model with activities $(w,x_i)$ coincides with the monomer-dimer model with activities $(1,\frac{x_i}{\sqrt w})$, since
\[ w^{|D|} \,=\, w^{N/2}\, \bigg(\frac{1}{\sqrt w}\bigg)^{|M(D)|} \;.\]
\end{remark}

\begin{remark} \label{rk0: pressure bounds}
The following bounds for the pressure (logarithm of the partition function) will be useful:
\eq \label{eq0: p bounds}
\sum_{i\in V}\log x_i \,\leq\, \log Z_{G} \,\leq\,
\sum_{i\in V}\log x_i + \sum_{ij\in E}\log\big(1+\frac{w_{ij}}{x_i\,x_j}\big) \;.
\eeq
The lower bound is obtained considering only the configuration with no dimers, 
while the upper bound is obtained by eliminating the hard-core constraint.
\end{remark}

An interesting fact about monomer-dimer models is that they are strictly related to Gaussian random vectors.

\begin{proposition}[Gaussian representation \cite{ACMrand,V}] \label{prop: gauss repr}
The partition function of any monomer-dimer model over $N$ vertices can be written as
\eq \label{eq: gauss repr}
Z_N \,=\, \Exi\bigg[\prod_{i=1}^N(\xi_i+x_i)\bigg] \;,
\eeq
where $\xi=(\xi_1,\dots,\xi_N)$ is a Gaussian random vector with mean $0$ and covariance matrix $W=(w_{ij})_{i,j=1,\dots,N}$ and $\Exi[\,\cdot\,]$ denotes the expectation with respect to $\xi$.
The diagonal entries $w_{ii}$ are arbitrary numbers, chosen in such a way that $W$ is a positive semi-definite matrix.
\end{proposition}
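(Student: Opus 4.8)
The plan is to expand the product $\prod_{i=1}^N(\xi_i+x_i)$ and take the expectation term by term, using Wick's theorem (Isserlis' formula) for the moments of a centered Gaussian vector. Expanding the product gives
\eq
\prod_{i=1}^N(\xi_i+x_i) \,=\, \sum_{S\subseteq\{1,\dots,N\}} \Big(\prod_{i\notin S}x_i\Big)\,\prod_{i\in S}\xi_i \;,
\eeq
so that $\Exi\big[\prod_{i=1}^N(\xi_i+x_i)\big] = \sum_{S}\big(\prod_{i\notin S}x_i\big)\,\Exi\big[\prod_{i\in S}\xi_i\big]$. The key input is that for a centered Gaussian vector, $\Exi\big[\prod_{i\in S}\xi_i\big]$ vanishes when $|S|$ is odd, and when $|S|$ is even it equals the sum over all perfect matchings (pairings) $\pi$ of the set $S$ of the products $\prod_{\{i,j\}\in\pi} w_{ij}$, since $\Exi[\xi_i\xi_j]=w_{ij}$. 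Note that because the $\xi_i$ need not be distinct — indeed all distinct indices here — this is exactly the standard Wick expansion; the arbitrary diagonal entries $w_{ii}$ never enter because a perfect matching of a set of distinct indices never pairs an index with itself.

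Next I would reorganize this double sum by grouping, for each even subset $S$ and each pairing $\pi$ of $S$, the data $(S,\pi)$ into a single object: the set of edges $D=\pi\subseteq E$. Such a $D$ is precisely a matching (monomer-dimer configuration) on the complete graph, its support $\bigcup D$ is the set $S$ of matched vertices, and its complement $V\setminus S = M(D)$ is the set of monomers. Conversely every matching $D\in\DD_G$ arises exactly once this way, with $S=\bigcup D$. Under this bijection the summand $\big(\prod_{i\notin S}x_i\big)\prod_{\{i,j\}\in\pi}w_{ij}$ becomes exactly $\prod_{i\in M(D)}x_i\prod_{ij\in D}w_{ij}$, and summing over all $D\in\DD_G$ reproduces the definition \eqref{eq0: pf} of $Z_N$. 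This completes the identity.

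The only genuine subtlety — and the step I would be most careful about — is the role of the diagonal entries $w_{ii}$ and the positive semi-definiteness requirement. One must observe that $W=(w_{ij})$ with the true off-diagonal dimer activities may fail to be positive semi-definite, so one adds a large enough multiple of the identity (or otherwise chooses the free diagonal) to make $W\succeq 0$ and hence have a bona fide Gaussian vector $\xi\sim\mathcal N(0,W)$; since the Wick expansion above never uses $\Exi[\xi_i^2]=w_{ii}$ (all indices appearing in any nonzero term are distinct), the value of $Z_N$ is unaffected by this choice. It is worth remarking that this also shows the right-hand side of \eqref{eq: gauss repr} is independent of the admissible choice of diagonal, which is consistent with $Z_N$ depending only on the off-diagonal $w_{ij}=\E[\xi_i\xi_j]$, $i\neq j$. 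I would also note the harmless case $N$ odd versus even: the empty matching ($S=\emptyset$) always contributes $\prod_i x_i$, matching the lower bound in \eqref{eq0: p bounds}, and no parity restriction on $N$ is needed since $S$ ranges over all subsets.
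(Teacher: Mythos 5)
Your proposal is correct and is essentially the paper's own argument: both rest on the Isserlis--Wick theorem and the bijection between pairs (even subset, pairing) and matchings of the complete graph, with the diagonal entries $w_{ii}$ chosen only to make $W$ positive semi-definite and never entering the expansion. The only cosmetic difference is the direction (you expand the Gaussian expectation, the paper rewrites $Z_N$ as a sum over subsets and pairings and then applies Wick), so no further comment is needed.
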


\proof
The monomer-dimer configurations on the complete graph are all the partitions into pairs of any set $A\subseteq\{1,\dots,N\}$, hence
\eq \label{eq: gauss repr proof1}
Z_N \,= \sum_{D\in\DD_N}\, \prod_{ij\in D}w_{ij} \prod_{\,i\in M(D)}\!\!x_i \,=
\sum_{A\subseteq\{1,\dots,N\}} \sum_{P\text{ partition}\atop\text{ of }A\text{ into pairs}}\, \prod_{ij\in P}w_{ij}\, \prod_{i\in A^c}x_i \;.
\eeq
Now choose $w_{ii}$ for $i=1,\dots,N$ such that the matrix $W=(w_{ij})_{i,j=1,\dots,N}$ is positive semi-definite\footnote{For example one can choose $w_{ii}\geq\sum_{j\neq i}w_{ij}$ for every $i=1,\dots,N$.}.
Then there exists an (eventually degenerate) Gaussian vector $\xi=(\xi_1,\dots,\xi_N)$ with mean $0$ and covariance matrix $W$. And by the Isserlis-Wick rule
\eq \label{eq: gauss repr proof2}
\Exi\bigg[\prod_{i\in A}\xi_i\bigg] \;= \sum_{P\text{ partition}\atop\text{ of }A\text{ into pairs}} \prod_{ij\in P}w_{ij} \;.
\eeq
Substituting \eqref{eq: gauss repr proof2} into \eqref{eq: gauss repr proof1} one obtains \eqref{eq: gauss repr}. \hfill \qed
\endproof

We notice that  the representation \eqref{eq: gauss repr} allows to express average values  w.r.t. the measure \eqref{eq0: measure} as Gaussian averages. For example, given a vertex $i\in V$, its monomer probability by \eqref{eq0: pf-hami} writes
\eq \label{eq: monomers average}
\left\langle \alpha_i \right\rangle_N\, = \,\frac{\partial}{\partial h_i}\log Z_N
\eeq
then, using the representation \eqref{eq: gauss repr} in r.h.s. of \eqref{eq: monomers average} together with the identity
$\frac{\partial}{\partial h_i}\equiv x_i\frac{\partial}{\partial x_i}$, one obtains
\eq \label{eq: gauss repr monomers average}
\left\langle \alpha_i \right\rangle_N \, = \,x_i\, \,\Exi\bigg[\dfrac{1}{\xi_i+x_i} g_N(\xi,x)\bigg]
\eeq
where $g_N(\xi,x)=\dfrac{1}{Z_N}\,\prod_{i=1}^N(\xi_i+x_i)$.\\

Heilmann and Lieb \cite{HL} provided a recursion for the partition functions of monomer-dimer models. As we will see this is a fundamental tool to obtain exact solutions and to prove general properties.

\begin{proposition}[Heilmann-Lieb recursion \cite{HL}] \label{prop: HL rec}
Fixing any vertex $i\in V$ it holds:
\eq \label{eq: HL rec}
Z_G \,=\, x_i\,Z_{G-i} \,+\, \sum_{j\sim i}\,w_{ij}\,Z_{G-i-j} \;.
\eeq
Here $G-i$ denotes the graph obtained from $G$ deleting the vertex $i$ and all its incident edges.
\end{proposition}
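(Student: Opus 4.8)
The plan is to decompose the sum defining $Z_G$ in \eqref{eq0: pf} according to the rôle played by the fixed vertex $i$ in a monomer-dimer configuration $D\in\DD_G$. By the hard-core constraint, exactly one of two mutually exclusive situations occurs: either $i\in M(D)$, i.e. $i$ is a monomer, or there is a \emph{unique} neighbour $j\sim i$ with $ij\in D$. So $Z_G$ splits as a sum over $\{D:i\in M(D)\}$ plus a sum over $j\sim i$ of the sums over $\{D:ij\in D\}$, and it suffices to identify each piece.

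For the first piece, the map sending $D$ (now regarded as a set of edges of $G-i$) to itself is a bijection from $\{D\in\DD_G:i\in M(D)\}$ onto $\DD_{G-i}$, since no edge of such a $D$ is incident to $i$. Along this bijection the dimer weight $\prod_{e\in D}w_e$ is unchanged, while the monomer weight factors as $x_i$ times the monomer weight of $D$ viewed in $G-i$; summing yields $x_i Z_{G-i}$. For the second piece, fix $j\sim i$: the map $D\mapsto D\setminus\{ij\}$ is a bijection from $\{D\in\DD_G:ij\in D\}$ onto $\DD_{G-i-j}$ — removing $ij$ leaves a matching avoiding $i$ and $j$, hence a matching of $G-i-j$, and conversely adjoining $ij$ to any matching of $G-i-j$ gives a matching of $G$ because $i,j$ are uncovered. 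Here the dimer weight loses exactly the factor $w_{ij}$, and the monomer set of $D$ in $G$ coincides with that of $D\setminus\{ij\}$ in $G-i-j$ (both $i$ and $j$ being covered). Summing over $D$ gives $w_{ij}Z_{G-i-j}$, and summing over $j\sim i$ produces \eqref{eq: HL rec}.

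Alternatively — and this is the viewpoint the present review emphasizes — one may first reduce to the complete graph via Remark \ref{rk0: riduz-coeff} and then read \eqref{eq: HL rec} directly off the Gaussian representation \eqref{eq: gauss repr}. Writing $\prod_{k=1}^N(\xi_k+x_k)=(\xi_i+x_i)\prod_{k\neq i}(\xi_k+x_k)$ and using linearity, $Z_N=x_i\,\Exi\!\big[\prod_{k\neq i}(\xi_k+x_k)\big]+\Exi\!\big[\xi_i\prod_{k\neq i}(\xi_k+x_k)\big]$. The first expectation equals $Z_{G-i}$ since a principal submatrix of a positive semi-definite matrix is again positive semi-definite, so the reduced Gaussian vector exists and still satisfies \eqref{eq: gauss repr proof2}. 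The second is handled by Gaussian integration by parts, $\Exi[\xi_i f(\xi)]=\sum_j w_{ij}\,\Exi[\partial_{\xi_j}f(\xi)]$, applied to $f(\xi)=\prod_{k\neq i}(\xi_k+x_k)$, whose derivatives are $\partial_{\xi_j}f=\prod_{k\neq i,\,j}(\xi_k+x_k)$; this gives $\sum_{j\neq i}w_{ij}Z_{G-i-j}$, which collapses to $\sum_{j\sim i}w_{ij}Z_{G-i-j}$ once the vanishing activities $w_{ij}=0$ for $ij\notin E$ are accounted for.

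The argument is essentially bookkeeping, so I do not expect a genuine obstacle; the only points needing a little care are verifying that the two restriction maps are honest bijections onto $\DD_{G-i}$ and $\DD_{G-i-j}$ (equivalently, in the Gaussian proof, justifying the integration-by-parts identity and the stability of positive semi-definiteness under passage to principal submatrices). Note that no hypothesis on the diagonal entries $w_{ii}$ enters anywhere, consistently with the fact that they never appear in \eqref{eq0: pf}.
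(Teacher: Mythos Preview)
Your proposal is correct and in fact covers \emph{both} routes the paper discusses: the direct combinatorial decomposition according to whether $i$ is a monomer or is matched to some $j\sim i$ (which the paper only states in one sentence before the proof), and the Gaussian integration-by-parts argument, which is the proof the paper chooses to write out in detail. Your Gaussian argument matches the paper's essentially line by line --- isolate the factor $(\xi_i+x_i)$, apply $\Exi[\xi_i f(\xi)]=\sum_j w_{ij}\,\Exi[\partial_{\xi_j}f(\xi)]$, and identify the resulting expectations as $Z_{G-i}$ and $Z_{G-i-j}$ via Proposition~\ref{prop: gauss repr} --- and you correctly flag the two small technical points (bijections in the combinatorial version; stability of positive semi-definiteness under principal submatrices in the Gaussian version).
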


The Heilmann-Lieb recursion can be obtained directly from the definition \eqref{eq0: pf}, exploiting the hard-core constraint: the first term on the r.h.s. of \eqref{eq: HL rec} corresponds to a monomer on $i$, while the following terms correspond to a dimer on $ij$ for some $j$ neighbour of $i$.
Here we show a different proof that uses Gaussian integration by parts.

\proof[see \cite{ACMrand}]
Set $N:=|V|$. Introduce zero dimer weights $w_{hk}=0$ for all the pairs $hk\notin E$, so that $Z_G\equiv Z_N$.
Following the proposition \ref{prop: gauss repr}, introduce an $N$-dimensional Gaussian vector $\xi$ with mean $0$ and covariance matrix $W$. Then write the identity \eqref{eq: gauss repr} isolating the vertex $i\,$:
\eq \label{eq: HL rec proof1}
Z_G \,=\, \Exi\bigg[\prod_{k=1}^N(\xi_k+x_k)\bigg] \,=\,
x_i\;\Exi\bigg[\prod_{k\neq i}(\xi_k+x_k)\bigg] \,+\, \Exi\bigg[\xi_i\,\prod_{k\neq i}(\xi_k+x_k)\bigg] \;.
\eeq
Now apply the Gaussian integration by parts to the second term on the r.h.s. of \eqref{eq: HL rec proof1}:
\eq \label{eq: HL rec proof2}
\Exi\bigg[\xi_i\,\prod_{k\neq i}(\xi_k+x_k)\bigg] \,=\,
\sum_{j=1}^N\, \Exi[\xi_i\xi_j]\; \Exi\bigg[ \frac{\partial}{\partial \xi_j}\prod_{k\neq i}(\xi_k+x_k) \bigg] \,=\,
\sum_{j\neq i}\, w_{ij}\; \Exi\bigg[\prod_{k\neq i,j}(\xi_k+x_k) \bigg] \;.
\eeq
Notice that summing over $j\neq i$ in the r.h.s. of \eqref{eq: HL rec proof2} is equivalent to sum over $j\sim i$, since by definition $w_{ij}=0$ if $ij\notin E$. Substitute \eqref{eq: HL rec proof2} into \eqref{eq: HL rec proof1}:
\eq \label{eq: HL rec proof3}
Z_G \,=\,
x_i\;\Exi\bigg[\prod_{k\neq i}(\xi_k+x_k)\bigg] \,+\, \sum_{j\sim i}\, w_{ij}\; \Exi\bigg[\prod_{k\neq i,j}(\xi_k+x_k) \bigg] \;.
\eeq
To conclude the proof observe that $(\xi_k)_{k\neq i}$ is an $(N-1)$-dimensional Gaussian vector with mean $0$ and covariance $(w_{hk})_{h,k\neq i}$. Hence by proposition \ref{prop: gauss repr}
\eq \label{eq: HL rec proof4}
Z_{G-i} \,=\, \Exi\bigg[\prod_{k\neq i}(\xi_k+x_k)\bigg] \;.
\eeq
And similarly
\eq \label{eq: HL rec proof5}
Z_{G-i-j} \,=\, \Exi\bigg[\prod_{k\neq i,j}(\xi_k+x_k)\bigg] \;.
\eeq
\endproof

The main general result about monomer-dimer models is the absence of phase transitions, proved by Heilmann and Lieb \cite{HL,HLprl}.
This result is obtained by localizing the complex zeros of the partition functions far from the positive real axes.
A different probabilistic approach has been later proposed by van den Berg \cite{vdB}.

\begin{theorem}[Zeros of the partition function \cite{HL}] \label{thm0: zeros}
Consider uniform monomer activity $x$ on the graph and arbitrary dimer activities $w_{ij}$.
The partition function $Z_G$ is a polynomial of degree $N$ in $x$, where $N=|V|$.
The complex zeros of $Z_G$ are purely imaginary:
\eq \label{eq0: zeros}
\{x\in\C \,|\, Z_G(w_{ij},x)=0\} \,\subset\, \i\,\R \;.
\eeq
Furthermore they interlace the zeros of $Z_{G-i}$ for any given $i\in V$, that is:
\eq \label{eq0: zeros-inter}
a_1 \leq a_1' \leq a_2 \leq a_2' \leq \dots \leq a_{N-1}' \leq a_N \;,
\eeq
where $-\i a_1,\dots,-\i a_N$ are the zeros of $Z_G$ and $-\i a_1',\dots,-\i a_{N-1}'$ are the zeros of $Z_{G-i}$.
The relation \eqref{eq0: zeros-inter} holds with strict inequalities if $w_{ij}>0$ for all $i,j\in V$.
\end{theorem}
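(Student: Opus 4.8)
The plan is to induct on $N=|V|$ using the Heilmann--Lieb recursion (Proposition~\ref{prop: HL rec}). By Remark~\ref{rk0: riduz-coeff} one may take $G$ complete; grouping the matchings by their number $m$ of dimers and using $|M(D)|=N-2m$ gives $Z_G(x)=\sum_{m\ge0}p_m\,x^{N-2m}$ with $p_m=\sum_{|D|=m}\prod_{ij\in D}w_{ij}\ge0$ and $p_0=1$, so $Z_G$ is a monic polynomial of degree $N$ having the parity of $N$. I would then pass to the \emph{matching polynomial}
\eq
 f_G(y)\,:=\,(-\i)^N\,Z_G(\i y)\,=\,\sum_{m\ge0}(-1)^m\,p_m\,y^{N-2m}\;,
\eeq
a monic real polynomial of degree $N$ with $Z_G(\i y)=0\iff f_G(y)=0$, so that the zeros of $Z_G$ are $\i$ times those of $f_G$: statement \eqref{eq0: zeros} is equivalent to ``$f_G$ has only real zeros'' and \eqref{eq0: zeros-inter} to ``the zeros of $f_{G-i}$ interlace those of $f_G$''. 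Inserting $x=\i y$ in \eqref{eq: HL rec} and collecting powers of $\i$ converts the recursion into
\eq
 f_G(y)\,=\,y\,f_{G-i}(y)\,-\,\sum_{j\sim i}w_{ij}\,f_{G-i-j}(y)\;,
\eeq
with the characteristic minus sign.

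First I would prove the claim when $G$ is complete with all $w_{ij}>0$, since then every vertex-deleted subgraph is of the same type and the induction delivers \emph{strict} interlacing, hence simple zeros, for free; the base cases $N\le1$ are trivial. In the inductive step put $p:=f_{G-i}$, monic of degree $N-1$ with simple real zeros $b_1>\dots>b_{N-1}$ by the inductive hypothesis, and $q_j:=f_{G-i-j}=f_{(G-i)-j}$ for $j\sim i$; applying the inductive hypothesis to $G-i$ and its vertex $j$, each $q_j$ has degree $N-2$ with zeros strictly interlacing those of $p$. Partial fractions give $q_j(y)/p(y)=\sum_{k=1}^{N-1}\gamma^{(j)}_k/(y-b_k)$ with $\gamma^{(j)}_k=q_j(b_k)/p'(b_k)$, and a short sign count (both $\sgn p'(b_k)$ and $\sgn q_j(b_k)$ equal $(-1)^{k-1}$ by strict interlacing) yields $\gamma^{(j)}_k>0$. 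Hence
\eq
 \frac{f_G(y)}{p(y)}\,=\,y\,-\,\sum_{k=1}^{N-1}\frac{\Gamma_k}{y-b_k}\;,\qquad \Gamma_k:=\sum_{j\sim i}w_{ij}\,\gamma^{(j)}_k\,>\,0\;,
\eeq
a function with everywhere positive derivative $1+\sum_k\Gamma_k/(y-b_k)^2$ that runs monotonically from $-\infty$ to $+\infty$ across each of the $N$ intervals $(-\infty,b_{N-1}),(b_{N-1},b_{N-2}),\dots,(b_1,+\infty)$ (the simple pole at each $b_k$ and the linear term at $\pm\infty$ all cooperate). So $f_G/p$ has exactly one zero per interval, giving $N$ real zeros $\beta_1>b_1>\beta_2>b_2>\dots>b_{N-1}>\beta_N$; since $f_G$ is monic of degree $N$ these are all of them, which closes the induction and, via $a_k=-\beta_k$ and $a'_k=-b_k$, yields \eqref{eq0: zeros-inter} with strict inequalities.

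The general case $w_{ij}\ge0$ then follows by continuity: apply the above to the activities $w_{ij}+\eps>0$ on the complete graph and let $\eps\downarrow0$, using that the roots of a monic polynomial depend continuously on its coefficients; this keeps $\{x:Z_G(w_{ij},x)=0\}\subset\i\R$ and degrades the strict interlacing into the possibly non-strict \eqref{eq0: zeros-inter}. The step I expect to be most delicate is the inductive step: the partial-fraction expansion together with the sign bookkeeping for the $\gamma^{(j)}_k$, and the behaviour of $f_G/p$ at the poles $b_k$ and at infinity — which is precisely what simultaneously forces the reality and the interlacing of the zeros — while the passage from $w_{ij}>0$ to $w_{ij}\ge0$ is a routine limiting argument.
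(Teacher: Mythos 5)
Your proof is correct and follows essentially the same route as the paper's: rotate to the real polynomial $(-\i)^N Z_G(\i y)$, induct on $N$ through the Heilmann--Lieb recursion for strictly positive dimer weights on the complete graph, and recover general nonnegative weights by a continuity argument. The only divergence is in how the inductive interlacing step is executed: the paper evaluates the sign of $\sum_{j\sim i} w_{ij}\,Q_{G-i-j}$ at the zeros of $Q_{G-i}$ and invokes the intermediate value theorem, whereas you expand $f_G/f_{G-i}$ in partial fractions as $y-\sum_k \Gamma_k/(y-b_k)$ with $\Gamma_k>0$ and count one zero per interval; both are standard, and your variant makes explicit the sign bookkeeping that the paper leaves to the reader.
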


\begin{corollary}[Absence of phase transitions]
Consider dimer activities $w\,w_{ij}^{(N)}$ and monomer activities $x\,x_i^{(N)}$ and assume they are chosen in such a way that $p:=\lim_{N\to\infty}\frac{1}{N}\log Z_N$ exists.
Then the function $p$ is analytic in the variables $(w,x)\in(0,\infty)^2$ and the derivatives $\frac{\partial^{h+k}}{\partial^h w\,\partial^k x}$ can be interchanged with the limit $N\to\infty$.
\end{corollary}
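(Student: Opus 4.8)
The plan is to exploit the scaling structure of $Z_N$ so as to reduce the two--variable claim to a one--variable statement governed by Theorem~\ref{thm0: zeros}, and then to invoke Vitali's convergence theorem and the Weierstrass theorem on derivatives. By Remark~\ref{rk0: riduz-coeff} we may take the underlying graph to be the complete graph $K_N$. Set $v_{ij}^{(N)}:=w_{ij}^{(N)}/(x_i^{(N)}x_j^{(N)})\geq0$ and let $r_N(u):=\sum_{D\in\DD_N}u^{|D|}\prod_{ij\in D}v_{ij}^{(N)}$ be the associated matching generating polynomial, so $r_N(0)=1$. Using $|M(D)|=N-2|D|$ together with the reduction identity in Remark~\ref{rk0: riduz-coeff} one gets $Z_N(w,x)=\big(\prod_i x_i^{(N)}\big)\,x^N\,r_N(w/x^2)$, hence
\[
\tfrac1N\log Z_N(w,x)\;=\;\log x\;+\;\tfrac1N\log Z_N(1,1)\;+\;g_N\!\left(\tfrac{w}{x^2}\right),\qquad g_N(u):=\tfrac1N\log r_N(u)-\tfrac1N\log r_N(1).
\]
Since by assumption $\tfrac1N\log Z_N$ converges pointwise on $(0,\infty)^2$, this display shows that $g_N(u)\to p(u,1)-p(1,1)=:\ell(u)$ for every $u\in(0,\infty)$, and that $g_N(1)=0$.

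\emph{Location of the zeros.} I claim that every complex zero of $r_N$ lies in $(-\infty,0)$. Indeed $P_N(y):=\sum_{D}y^{N-2|D|}\prod_{ij\in D}v_{ij}^{(N)}$ is the partition function of a monomer--dimer model on $K_N$ with \emph{uniform} monomer activity $y$ and dimer activities $v_{ij}^{(N)}$, and $P_N(y)=y^{N}r_N(1/y^{2})$. If $r_N(u_0)=0$ for some $u_0\in\C\setminus\{0\}$, pick $y_0\in\C\setminus\{0\}$ with $y_0^{2}=1/u_0$; then $P_N(y_0)=0$, so $y_0\in\i\,\R$ by Theorem~\ref{thm0: zeros}, whence $u_0=1/y_0^{2}<0$; and $u_0=0$ is impossible since $r_N(0)=1$. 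Therefore $r_N(u)=\prod_j\big(1+\sigma_j^{(N)}u\big)$ with all $\sigma_j^{(N)}>0$; in particular $r_N$ has no zero in the right half--plane $U:=\{\Re u>0\}$, so $g_N$ extends to a holomorphic function on $U$ (choosing the branch of the logarithm that is real on $(0,\infty)$), and $U$ contains a complex neighbourhood of $(0,\infty)$.

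\emph{Uniform bounds and the limit.} From the product formula and the elementary bound $|1+\sigma u|\geq 1+\sigma\,\Re u>0$ (valid for $\sigma>0$, $\Re u>0$), one obtains, for $u$ in the disc $\{|u-u_0|<\delta\}$ with $0<\delta<u_0$, the sandwich $g_N(u_0-\delta)\leq\Re g_N(u)\leq g_N(u_0+\delta)$, whose two ends are convergent numerical sequences. Covering a given compact $K\subset U$ by finitely many such discs shows that $\Re g_N$ is uniformly bounded on $K$; combined with $g_N(1)=0$, the Borel--Carath\'eodory inequality upgrades this to a uniform bound on $|g_N|$ over $K$. Hence $\{g_N\}$ is locally bounded on $U$ and converges pointwise on $(0,\infty)$, so by Vitali's theorem $g_N\to g$ locally uniformly on $U$, with $g$ holomorphic and $g|_{(0,\infty)}=\ell$. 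Letting $N\to\infty$ in the first display gives $p(w,x)=\log x+p(1,1)+g(w/x^{2})$ on $(0,\infty)^2$; since $x\mapsto\log x$, $(w,x)\mapsto w/x^{2}$ and $g$ are holomorphic on complex neighbourhoods of the relevant real sets, $p$ extends to a holomorphic function of $(w,x)$ on a complex neighbourhood of $(0,\infty)^2$, i.e.\ $p$ is real--analytic there.

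Finally, the same displays give that $\tfrac1N\log Z_N\to p$ \emph{locally uniformly} on a complex neighbourhood of $(0,\infty)^2$, so the Weierstrass theorem (Cauchy estimates for derivatives of locally uniformly convergent holomorphic functions) yields $\frac{\partial^{h+k}}{\partial^h w\,\partial^k x}\big(\tfrac1N\log Z_N\big)\to\frac{\partial^{h+k}}{\partial^h w\,\partial^k x}\,p$ uniformly on compact subsets of $(0,\infty)^2$, which is the asserted interchange. The crux of the argument is the step that turns real--axis data (existence of $p$; the real, interlacing zeros furnished by Theorem~\ref{thm0: zeros}) into \emph{complex} uniform estimates: it is precisely the negativity of the zeros of $r_N$ that produces the monotone sandwich for $\Re g_N$, and without that input the local boundedness required by Vitali would be unclear. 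Everything else is routine complex analysis.
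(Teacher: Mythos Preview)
Your proof is correct and follows essentially the same route as the paper's: reduce the two parameters $(w,x)$ to a single complex variable via the scaling identity of Remark~\ref{rk0: riduz-coeff}, use Theorem~\ref{thm0: zeros} to locate the zeros and obtain a zero-free simply connected region, establish local uniform bounds there, and conclude with Vitali--Porter and Weierstrass. The only differences are cosmetic: the paper works directly in the monomer variable $y=x/\sqrt{w}$ and bounds $|p_N|$ from the factorisation $Z_N(y)=\prod_k(y-\i a_k)$ (which gives $|\Im p_N|\le\pi/2$ for free, since each factor has argument in $(-\pi/2,\pi/2)$ on $\{\Re y>0\}$), whereas you pass to $u=w/x^2=1/y^2$, factor $r_N(u)=\prod_j(1+\sigma_j^{(N)}u)$, and recover the imaginary-part bound via Borel--Carath\'eodory.
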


\subsection{Hard-core and imitative interactions} \label{sec: imit}

Beyond the hard-core constraint it is possible to enrich monomer-dimer models with other kinds of interaction.
For example in this work we consider, for a given $D\in\DD_G$, the set of edges connecting particles of the same kind
\eq\label{eq0: imitation sets}
I(D)=\{ij\in E \,|\, i,j\in M(D)\text{ or } i,j\notin M(D)\}
\eeq
and we introduce an interaction between any pair of vertices $ij\in I(D)$ tuned by a coupling $J_{ij}\in\R$. More precisely

\begin{definition}[Monomer-dimer models with imitative interactions] \label{df0: md model-im}
An \textit{imitative monomer-dimer model} on $G$ is given by the following Gibbs probability measure on $\DD_G$:
\eq \label{eq0: measure-im}
\mu_{G}(D) \,:=\, \frac{1}{Z_G}\; \prod_{ij\in D}w_{ij} \prod_{i\in M(D)}\!\!x_i \prod_{ij\in I(D)}\!\!e^{J_{ij}}
\eeq
for all $D\in\DD_G$.
The partition function is
\eq \label{eq0: measure-im}
Z_G \,:=\, \sum_{D\in\DD_G} \prod_{ij\in D}w_{ij} \prod_{i\in M(D)}\!\!x_i \prod_{ij\in I(D)}\!\!e^{J_{ij}}
\eeq
The dependence of the measure on the coefficients $w_{ij},\,x_i,\,J_{ij}$ is usually implicit in the notations.
\end{definition}

When all the $J_{ij}$'s take the value zero this model is the pure hard-core model introduced in the previous section.
Positive values of the $J_{ij}$'s favour the configurations with clusters of dimers and clusters of monomers.

\begin{remark} \label{rk0: hami-imit}
The usual Gibbs form $\frac{1}{Z_G}\,e^{-H_G(\alpha)}$ for the measure \eqref{eq0: measure-im} is obtained by setting $x_i=:e^{h_i}$, $w_{ij}=:e^{h_{ij}}$ and taking as Hamiltonian function
\eq \label{eq0: hamilt-im}
H_G(\alpha) \,:=\, -\sum_{ij\in E}h_{ij}\alpha_{ij} - \sum_{i\in V}h_i\alpha_i - \sum_{ij\in E} J_{ij} \big( \alpha_i\alpha_j + (1-\alpha_i)(1-\alpha_j) \big)
\eeq
for all $\alpha\in\DD_G$.
\end{remark}

The Gaussian representation and the recursion relation found for the pure hard-core case can be extended
to the imitative case.

\begin{proposition} \label{prop: gauss-repr imit}
The partition function of any monomer-dimer model over $N$ vertices can be written as
\eq \label{eq: gauss-repr imit}
Z_N \,=\, \Exi\Big[\sum_{A\subset\{1,\dots,N\}}\, \prod_{i\in A}\xi_i\; \prod_{i\in A^c}x_i\; \prod_{\substack{i,j\in A\text{ or}\\i,j\in A^c}}e^{J_{ij}/2} \Big] \;,
\eeq
where $\xi=(\xi_1,\dots,\xi_N)$ is a Gaussian random vector with mean $0$ and covariance matrix $W=(w_{ij})_{i,j=1,\dots,N}$ and $\Exi[\,\cdot\,]$ denotes the expectation with respect to $\xi$.
The diagonal entries $w_{ii}$ are arbitrary numbers, chosen in such a way that $W$ is a positive semi-definite matrix.
Moreover we set $J_{ii}=0\,$.
\end{proposition}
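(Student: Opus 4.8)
The plan is to retrace the proof of Proposition~\ref{prop: gauss repr}, carrying the imitative weight along unchanged; the point is that this extra weight depends only on the set of vertices covered by dimers, not on the matching itself, and is deterministic, so it survives all the manipulations trivially.

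First I would reduce to the complete graph as in Remark~\ref{rk0: riduz-coeff}: extending the model from $G$ to $K_N$ by setting $w_{ij}=0$ \emph{and} $J_{ij}=0$ for every $ij\notin E$ changes neither the hard-core sum nor any factor $\prod_{ij\in I(D)}e^{J_{ij}}$ (the new pairs carry weight $e^{0}=1$ and the new dimers weight $0$), so $Z_G\equiv Z_N$. Then, exactly as in \eqref{eq: gauss repr proof1}, I would parametrize a configuration $D\in\DD_N$ by the set $A\subseteq\{1,\dots,N\}$ of vertices incident to a dimer together with a partition $P$ of $A$ into pairs: $M(D)=A^c$, and on $K_N$ the set $I(D)$ is precisely the collection of (unordered) pairs internal to $A$ together with those internal to $A^c$. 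Hence
\eq
Z_N \,=\, \sum_{A\subseteq\{1,\dots,N\}} \prod_{i\in A^c}x_i \;\prod_{\substack{i,j\in A\text{ or}\\i,j\in A^c}}\!\!e^{J_{ij}/2} \;\sum_{P\text{ partition}\atop\text{ of }A\text{ into pairs}}\prod_{ij\in P}w_{ij}\;,
\eeq
where in passing to the product of $e^{J_{ij}/2}$ over ordered index pairs I used $J_{ij}=J_{ji}$ and $J_{ii}=0$ (each unordered pair is then counted twice and the diagonal terms are trivial).

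The remaining steps are the ones already used: choosing $w_{ii}$ so that $W$ is positive semi-definite (e.g. $w_{ii}\geq\sum_{j\neq i}w_{ij}$), there is a centered Gaussian vector $\xi$ with covariance $W$, and the Isserlis--Wick identity \eqref{eq: gauss repr proof2} replaces $\sum_P\prod_{ij\in P}w_{ij}$ by $\Exi[\prod_{i\in A}\xi_i]$ — both sides vanishing when $|A|$ is odd. Substituting this and interchanging the finite sum over $A$ with the expectation (legitimate because $\prod_{i\in A^c}x_i$ and $\prod e^{J_{ij}/2}$ are non-random) gives \eqref{eq: gauss-repr imit}. I do not expect any genuine difficulty: the single point requiring care is the combinatorial identification of $I(D)$ on $K_N$ with the pairs internal to $A$ and to $A^c$, together with the attendant factor-of-$\tfrac12$ bookkeeping between the unordered product of $e^{J_{ij}}$ and the ordered product of $e^{J_{ij}/2}$. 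Everything else is verbatim the pure hard-core argument.
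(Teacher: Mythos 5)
Your proposal is correct and follows exactly the route the paper intends: the paper itself only remarks that ``the proof is the same as Proposition~\ref{prop: gauss repr}'', and your argument is precisely that proof with the deterministic imitative weight carried along, including the right identification of $I(D)$ with the pairs internal to $A$ and to $A^c$ and the ordered-pair/half-exponent bookkeeping via $J_{ij}=J_{ji}$, $J_{ii}=0$. Nothing further is needed.
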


The proof is the same as proposition \ref{prop: gauss repr}.
It is interesting to observe that, when all the $\xi_i$'s are positive, the sum inside the expectation on the r.h.s. of \eqref{eq: gauss-repr imit} is the partition function of an Ising model.

\begin{proposition} \label{prop: HL-rec imit}
Fixing any vertex $i\in V$ it holds:
\eq \label{eq: HL-rec imit}
Z_G \,=\, x_i\,\widetilde Z_{G-i} \,+\, \sum_{j\sim i}\,w_{ij}\,\widetilde Z_{G-i-j} \;,
\eeq
where:
\begin{itemize}
\item in the partition function $\widetilde Z_{G-i}$ the monomer activity $x_k$ is replaced by $x_k\,e^{J_{ik}}$ for every vertex $k$ (notice that only the neighbours of $i$ actually change their activities);
\item in the partition function $\widetilde Z_{G-i-j}$ the dimer activity $w_{kk'}$ is replaced by $w_{kk'}\,e^{J_{ik}+J_{ik'}+J_{jk}+J_{jk'}}$ for all vertices $k,k'$ (notice that only the neighbours of $i$ or $j$ actually change their activities).
\end{itemize}
\end{proposition}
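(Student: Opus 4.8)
The plan is to adapt the Gaussian integration-by-parts proof of Proposition \ref{prop: HL rec} to the imitative setting, using the representation \eqref{eq: gauss-repr imit} as the starting point. First I would fix the vertex $i$ and split the sum over $A\subseteq\{1,\dots,N\}$ in \eqref{eq: gauss-repr imit} according to whether $i\in A$ (vertex $i$ carries a factor $\xi_i$, i.e. $i$ will become a dimer endpoint) or $i\in A^c$ (vertex $i$ carries $x_i$, i.e. $i$ is a monomer). The monomer term is immediate: extracting $x_i$ leaves $\sum_{A'\subseteq\{1,\dots,N\}\setminus i}\prod_{k\in A'}\xi_k\prod_{k\in (A')^c\setminus i}x_k\prod_{i,j\in A'\text{ or }i,j\in(A')^c}e^{J_{ij}/2}$, except that the pairs $\{i,k\}$ with $k\in (A')^c$ (i.e. $k$ a monomer) contribute an extra factor $e^{J_{ik}/2}$, and similarly pairs $\{i,k\}$ with $k\in A'$ would contribute $e^{J_{ik}/2}$ once we recall $i\in A^c$ means $i$ is a monomer — wait, here $i\in A^c$, so the imitation factor $e^{J_{ik}/2}$ appears exactly when $k\in A^c$, i.e. $k$ is also a monomer. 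By Remark \ref{rk0: hami-imit} the monomer/monomer and dimer/dimer couplings are symmetric, so one checks that collecting these residual $e^{J_{ik}}$ factors (the $1/2$ doubling because the product in \eqref{eq: gauss-repr imit} runs over ordered-ish pairs) is precisely equivalent to replacing $x_k$ by $x_k e^{J_{ik}}$ for every neighbour $k$ of $i$; this yields $x_i\,\widetilde Z_{G-i}$ as claimed.

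Next I would treat the term with $i\in A$: it equals $\Exi\big[\xi_i\,F_i(\xi)\big]$ where $F_i(\xi)=\sum_{A'\subseteq\{1,\dots,N\}\setminus i}\prod_{k\in A'}\xi_k\prod_{k\notin A',\,k\neq i}x_k\prod(\text{imitation factors})$, with the imitation factors now including $e^{J_{ik}/2}$ for every $k\in A'$ (since $i\in A$ forces those pairs into the product). Applying Gaussian integration by parts, $\Exi[\xi_i F_i(\xi)]=\sum_{j\neq i}w_{ij}\,\Exi[\partial_{\xi_j}F_i(\xi)]$; the derivative $\partial_{\xi_j}$ acts on $F_i$ by selecting, in each monomial, the terms where $j\in A'$ and removing the factor $\xi_j$. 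The combinatorial effect is to force both $i$ and $j$ out of the remaining index set as dimer endpoints, leaving a sum over $A''\subseteq\{1,\dots,N\}\setminus\{i,j\}$. I would then carefully track which residual imitation exponentials survive: a neighbour $k$ of $i$ that is a dimer endpoint ($k\in A''$) picked up $e^{J_{ik}/2}$, a neighbour $k$ of $i$ that is a monomer picked up nothing from the $i$-pairs because $i\in A$, $k\in A^c$ are in different classes — and symmetrically for $j$. To get the symmetric dimer/dimer and monomer/monomer structure of the reduced model one has to re-symmetrize: the point is that in the reduced configuration on $G-i-j$, $k$ and $k'$ being dimer endpoints is the relevant class, and one shows the collected factors reassemble into the replacement $w_{kk'}\mapsto w_{kk'}e^{J_{ik}+J_{ik'}+J_{jk}+J_{jk'}}$. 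Here the $1/2$'s and the doubling over pairs $\{i,k\}$, $\{k,i\}$ need to be reconciled — this bookkeeping is exactly where the four-term exponent in the statement comes from.

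Finally, I would identify $\Exi[\partial_{\xi_j}F_i(\xi)]$, after the above simplifications, with $\widetilde Z_{G-i-j}$: noting that $(\xi_k)_{k\neq i,j}$ is a centered Gaussian with covariance $(w_{hk})_{h,k\neq i,j}$, Proposition \ref{prop: gauss-repr imit} applied to this $(N-2)$-dimensional vector — with the modified dimer activities and with $J_{kk'}$ unchanged — gives precisely $\widetilde Z_{G-i-j}$. Summing the monomer term and the dimer terms over $j\sim i$ (the sum over all $j\neq i$ collapses to $j\sim i$ since $w_{ij}=0$ otherwise) produces \eqref{eq: HL-rec imit}.

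I expect the main obstacle to be purely combinatorial rather than analytic: correctly matching the half-integer exponents $J_{ij}/2$ appearing in \eqref{eq: gauss-repr imit} (which are there because the product runs over all ordered pairs in a class, or equivalently each unordered pair is counted with a square root) against the integer exponents $J_{ik}$, $J_{ik}+J_{ik'}+J_{jk}+J_{jk'}$ in the statement, and verifying that the "cross" pairs $\{i,k\}$ with $i$ and $k$ in opposite classes genuinely contribute no surviving factor. A clean way to handle this is to rewrite $\prod_{i,j\in A\text{ or }i,j\in A^c}e^{J_{ij}/2}$ as $\exp\big(\tfrac12\sum_{i\neq j}J_{ij}(\1_{i,j\in A}+\1_{i,j\in A^c})\big)$, peel off the terms involving the fixed index $i$ (and later $j$), and re-exponentiate; the algebra is routine once set up this way, which is presumably why the authors simply say "the proof is the same."
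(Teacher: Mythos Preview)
Your approach is correct but differs from the paper's. The paper does not use the Gaussian representation here: it simply says the relation ``can be obtained directly from the definition'', splitting $\DD_G$ according to whether $i$ is a monomer or is the endpoint of a dimer $ij$, and then reading off which imitation factors $e^{J_{ik}}$ survive in each branch. That direct combinatorial split is a one-line argument once the definition \eqref{eq0: measure-im} is in hand.

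Your route via Proposition~\ref{prop: gauss-repr imit} and Gaussian integration by parts is the natural imitative analogue of the proof the paper gives for Proposition~\ref{prop: HL rec}, and it works: splitting on $i\in A$ versus $i\in A^c$, applying $\Exi[\xi_i F]=\sum_j w_{ij}\Exi[\partial_{\xi_j}F]$, and then recognising the resulting sums as $\widetilde Z_{G-i}$ and $\widetilde Z_{G-i-j}$ via Proposition~\ref{prop: gauss-repr imit} on the reduced index sets is a valid derivation. Your suggestion to rewrite the imitation product as $\exp\big(\tfrac12\sum_{k\neq l}J_{kl}(\1_{k,l\in A}+\1_{k,l\in A^c})\big)$ and peel off the terms containing $i$ (and then $j$) is exactly the right way to keep the bookkeeping under control, and it cleanly explains why the surviving factors attach to $x_k$ in the monomer branch and to $w_{kk'}$ in the dimer branch. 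The trade-off: the paper's direct argument is shorter and avoids the half-exponent accounting entirely, while your argument has the merit of unifying the imitative recursion with the Gaussian proof of the pure hard-core case and would, if carried out carefully, also surface the residual factor $e^{J_{ij}}$ coming from the pair $(i,j)$ itself in the dimer branch. One small misreading: the remark ``the proof is the same'' in the paper refers to Proposition~\ref{prop: gauss-repr imit}, not to the recursion you are proving.
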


The relation \eqref{eq: HL-rec imit} can be obtained directly from the definition: the first term on the r.h.s. corresponds to a monomer on $i$, while the following terms correspond to a dimer on $ij$ for some $j$ neighbour of $i$.

The hard-core interaction is not sufficient to cause a phase transition, but adding also the imitative interaction the system can have phase transitions \cite{Ch,Chcambr,HLliquid,ACM}: in sections \ref{sec: meanfield} we will study this phase transition on the complete graph.
The location of the zeros of the partition function in the complex plane in presence of imitation is an open problem.

\section{Quenched models: Erd\H{o}s-R\'enyi and random field} \label{sec: random}
In this section we consider monomer-dimer models with pure hard-core interactions in some random environment: the randomness is either in the structure of the graph or in the activities.
In the first case we considered a class of random graphs that have locally tree-like structure and finite variance degree distribution \cite{AC}: this is the same for which the ferromagnetic Ising model was rigorously solved by Dembo and Montanari \cite{DM,DMfact}, using the local weak convergence strategy developed in \cite{A}. For the sake of clarity, in this review we have chosen to present the results on the Erd\H{o}s-R\'enyi random graph, but they easily extend for example to random regular graphs and configuration models.

\subsection{Self-averaging for monomer-dimer models} \label{sec: self-av}
One of the most important property describing the effects  of randomness in statistical mechanics models is the \textit{self-averaging} of physical quantities. Under quite general hypothesis a monomer-dimer model with independent random weights has a self-averaging pressure density \cite{ACMrand}.

Let $w_{ij}^{(N)}\geq0\,$, $1\leq i<j\leq N$, $N\in\N$, and $x_i>0\,$, $i\in \N$, be \textit{independent} random variables and consider the (random!) partition function
\eq
Z_N \,=\, \sum_{D\in\DD_N}\, \prod_{ij\in D}w_{ij}^{(N)} \!\!\prod_{\,i\in M_N(D)}\!\!\!\!x_i \;.
\eeq
Since the dimer weights may depend on $N$ and may take the value zero, this framework is very general.
Denote simply by $\E[\,\cdot\,]$ the expectation with respect to all the weights and assume
\eq \label{eq: self-av hyp}
\sup_N\sup_{ij}\E[w_{ij}^{(N)}]=:C_1<\infty\,, \quad \sup_{i}\,\E[x_i]=:C_2<\infty\,, \quad \sup_{i}\,\E[x_i^{-1}]=:C_3<\infty \;.
\eeq
The pressure density $p_N:=\frac{1}{N}\log Z_N$ is a random variable with finite expectation, indeed
\[ N\,p_N \,\begin{cases}
\,\geq\, \log \prod_{i=1}^N x_i \,=\, \sum_{i=1}^N \log x_i \,\geq\, \sum_{i=1}^N (1+x_i^{-1}) \; \in L^1(\Prob)\\
\,\leq\, Z_N-1\; \in L^1(\Prob) \end{cases} \,. \]
The following theorem shows that in the limit $N\to\infty$ the pressure density $p_N$ concentrates around its expectation.

\begin{theorem}[see \cite{ACMrand}] \label{thm: self-av}
For all $t>0$, $N\in\N$, $q\geq1$
\begin{equation} \label{eq: self-av1}
\Prob\big( \,|p_N-\E[p_N]| \geq t \big) \,\leq\,
2\,\exp\bigg(-\frac{t^2\,N}{4\,q^2\,\log^2 N} \bigg) \,+\, (a+b\,N)\,N^{1-q} \;,
\end{equation}
where $a:=4+2C_2C_3\,$, $b:=2C_1C_3^2\,$.
As a consequence, choosing $q>3$,
\begin{equation} \label{eq: self-av2}
|p_N-\E[p_N]| \xrightarrow[N\to\infty]{} 0\,\ \Prob\text{-almost surely} \;.
\end{equation}
\end{theorem}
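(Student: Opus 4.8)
The plan is to establish \eqref{eq: self-av1} by a bounded-differences (Azuma--Hoeffding) argument for $\log Z_N$ along a vertex-exposure filtration, run not on the original weights --- whose unboundedness would destroy any deterministic control of the increments --- but on a model in which all activities and dimer weights have been truncated at a polynomial-in-$N$ scale; the Azuma bound produces the exponential term, while the probability that the truncation changes anything produces the polynomial term $(a+bN)N^{1-q}$. The almost sure statement \eqref{eq: self-av2} then follows from Borel--Cantelli, since for $q>3$ both $\sum_N\exp\!\big(-t^2N/(4q^2\log^2 N)\big)$ and $\sum_N(a+bN)N^{1-q}$ converge.

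Concretely, I would order the randomness vertex by vertex: let $\F_i$ be generated by $x_1,\dots,x_i$ together with all dimer weights $w^{(N)}_{k\ell}$ with $k,\ell\le i$, so that $\F_i$ encodes the weighted subgraph induced on $\{1,\dots,i\}$, and consider the Doob martingale $M_i:=\E[\log Z_N\mid\F_i]$ with $M_0=\E[\log Z_N]$ and $M_N=\log Z_N$. The increment $|M_i-M_{i-1}|$ is at most the oscillation of $\log Z_N$ when the $i$-th block $(x_i,(w^{(N)}_{ki})_{k<i})$ varies and everything else is frozen. The crucial point --- and the reason the effective number of increments is $N$ rather than $\binom N2$, which is what puts an $N$ (and not just a constant) in the exponent --- is that this oscillation is governed by a \emph{single} logarithm: using the Heilmann--Lieb recursion (Proposition~\ref{prop: HL rec}) together with the elementary monotonicity $Z_{G-k}\ge x_i\,Z_{G-k-i}$, one gets for the graph $G'$ obtained from $G$ by zeroing all the edges at $i$ present in block $i$ that
\[
0\;\le\;\log Z_G-\log Z_{G'}\;\le\;\log\!\Big(1+\tsum_{k<i}\tfrac{w^{(N)}_{ki}}{x_k x_i}\Big),
\]
while the dependence on $x_i$ alone is $1$-Lipschitz in $\log x_i$ because $\partial_{h_i}\log Z_N=\langle\alpha_i\rangle_N\in[0,1]$ (cf.\ \eqref{eq: monomers average}). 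Hence $|M_i-M_{i-1}|$ is bounded, up to the analogous term for the comparison block, by $|\log x_i|+\log\!\big(1+\sum_{k<i}w^{(N)}_{ki}/(x_kx_i)\big)$.

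Next I would define the truncated partition function $\widetilde Z_N$ by replacing each $x_i$ by its projection onto $[N^{-q},N^{q}]$ and each $w^{(N)}_{ij}$ by $w^{(N)}_{ij}\wedge N^{q}$, and set $\widetilde p_N:=\tfrac1N\log\widetilde Z_N$; for this model the two estimates above make every Doob increment deterministically $O(q\log N)$, so $\sum_i|M_i-M_{i-1}|^2=O(q^2N\log^2N)$ and Azuma--Hoeffding gives $\Prob(|\widetilde p_N-\E\widetilde p_N|\ge t)\le 2\exp\!\big(-t^2N/(4q^2\log^2N)\big)$ once the truncation exponents are tuned to the constants. On the event $\mathcal G$ that $x_i\in[N^{-q},N^{q}]$ for all $i$ and $\sum_{j\ne i}w^{(N)}_{ij}/(x_ix_j)\le N^{q}$ for all $i$, no truncation occurs and $\widetilde Z_N=Z_N$; by \eqref{eq: self-av hyp}, independence (which factors $\E[w^{(N)}_{ij}]\E[x_i^{-1}]\E[x_j^{-1}]\le C_1C_3^2$ and bounds $\E[x_i]\le C_2$, $\E[x_i^{-1}]\le C_3$) and Markov's inequality, $\Prob(\mathcal G^c)\le(a+bN)N^{1-q}$ with $a=4+2C_2C_3$, $b=2C_1C_3^2$. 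Finally, combining $\{\widetilde Z_N=Z_N\}\supseteq\mathcal G$ with the Azuma bound yields $\Prob(|p_N-\E\widetilde p_N|\ge t)\le 2\exp(-t^2N/(4q^2\log^2N))+(a+bN)N^{1-q}$, and one replaces $\E\widetilde p_N$ by $\E p_N$ at negligible cost since $|\E p_N-\E\widetilde p_N|\le\E[\,|p_N-\widetilde p_N|\,;\mathcal G^c]\to 0$ by Cauchy--Schwarz and the moment bounds (the lower bound $\log Z_N\ge\sum_i\log x_i$ from \eqref{eq0: p bounds} keeping the relevant $L^2$ norms under control).

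I expect the main obstacle to be exactly the single-logarithm bound for deleting all edges at one vertex --- i.e.\ seeing that the pressure is genuinely an $N$-increment martingale after one absorbs the $\Theta(N^2)$ edge weights into $N$ blocks --- and, relatedly, choosing the truncation level so that the increments are $O(\log N)$ uniformly; this $\log N$ is the origin of the $\log^2N$ loss in the exponent and seems hard to avoid when union-bounding a polynomial truncation over $\Theta(N^2)$ edges. The remaining bookkeeping (the Markov estimates for $\Prob(\mathcal G^c)$ and the comparison of $\E\widetilde p_N$ with $\E p_N$) is routine given \eqref{eq: self-av hyp}.
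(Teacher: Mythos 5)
Your overall strategy --- a Doob martingale along a vertex-exposure filtration, the Heilmann--Lieb-based estimate $0\le\log Z_G-\log Z_{G'}\le\log\big(1+\sum_{k}w^{(N)}_{ki}/(x_kx_i)\big)$ for wiping out all edges at one vertex, the $1$-Lipschitz dependence on $h_i=\log x_i$, Azuma--Hoeffding after a polynomial-scale truncation, Markov/union bounds for the truncation failure, and Borel--Cantelli for $q>3$ --- is the right one and is in the spirit of the argument in the cited reference \cite{ACMrand}; in particular you correctly identified the key point that the pressure is an $N$-increment (not $\binom N2$-increment) martingale thanks to the single-logarithm oscillation bound coming from Proposition~\ref{prop: HL rec}.

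However, as written the scheme does not deliver the inequality \eqref{eq: self-av1} as stated. First, truncating each $x_i$ to $[N^{-q},N^{q}]$ and each $w^{(N)}_{ij}$ at $N^{q}$ does \emph{not} control the per-vertex sums in the truncated model: one only gets $\sum_{k<i}\widetilde w_{ki}/(\widetilde x_k\widetilde x_i)\le N^{3q+1}$, so the deterministic increment bound is of order $(8q+2)\log N$ rather than the $\sqrt2\,q\log N$ needed for the exponent $t^2N/(4q^2\log^2N)$; retuning the truncation exponent to shrink the increments necessarily inflates the Markov term beyond $N^{1-q}$, and in any case your union bound yields $N(C_2+C_3)N^{-q}+C_1C_3^2N^{2-q}$, whose vertex part is not dominated by $(4+2C_2C_3)N^{1-q}$ in general (take $C_2$ large and $C_3$ small, compatibly with $C_2C_3\ge1$). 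The good event you introduce controls $\sum_{j\ne i}w^{(N)}_{ij}/(x_ix_j)\le N^q$, but that event cannot be imposed on the truncated model on which Azuma is actually run, and conditioning a Doob martingale on such an event is exactly the delicate step that the cut-off constants in \eqref{eq: self-av1} encode. Second, the final replacement of $\E[\widetilde p_N]$ by $\E[p_N]$ is not free: \eqref{eq: self-av1} is a non-asymptotic bound asserted for every $t>0$ and every $N$, so an additive discrepancy $\delta_N=|\E[p_N]-\E[\widetilde p_N]|>0$, however small, only gives control of $\Prob(|p_N-\E[p_N]|\ge t+\delta_N)$; you must either absorb $\delta_N$ into the stated right-hand side quantitatively or run the concentration argument for $p_N$ itself. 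None of this affects the qualitative conclusion: your scheme does produce a bound of the same shape (exponential in $N/\log^2N$ plus a polynomially small term), and then \eqref{eq: self-av2} follows by Borel--Cantelli exactly as you indicate.
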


If the random variables $w_{ij}^{(N)},\,x_i,\,x_i^{-1}$ are bounded, then one can obtain an exponential rate of convergence instead of \eqref{eq: self-av1}.

\subsection{Erd\H{o}s-R\'enyi random graph}

Let $G_N$ be a Erd\H{o}s-R\'enyi random graph over $N$ vertices: each pair of vertices is connected by an edge independently with probability $c/N>0$.
Denote by $Z_N(x)$ the partition function of a monomer-dimer model with monomer activity $x>0$ and pure hard-core interaction on the graph $G_N$:
\eq \label{eq: Z unif w}
Z_N(x) \,=\, \sum_{D\in\DD_{G_N}} x^{N-2|D|} \;,
\eeq
$\langle\,\cdot\,\rangle_{G_N,x}$ will be the corresponding Gibbs expected value. The pressure density is
\eq
p_N(x) \,:=\, \frac{1}{N}\,\log Z_N(x) \;,
\eeq
and the monomer density is
\begin{equation}
m_N(x) \,:=\, \Big\langle\frac{N-2|D|}{N}\,\Big\rangle_{N,x} \,=\, x\,\frac{\partial p_N}{\partial x}(x) \;.
\end{equation}
Since the set of configurations $\DD_{G_N}$ is random, the partition function, the pressure density and the monomer density are random variables and the Gibbs measure is a random measure. This randomness is treated as quenched with respect to the thermal fluctuations.

\begin{theorem}[see \cite{AC,S}]\label{teor: treelike}
Almost surely and for all $x>0$ the monomer density and the pressure density converge in the thermodynamical limit. Precisely:
\eq \label{eq: mNlim}
m_N(x)\ \xrightarrow[N\to\infty]{\textrm{a.s.}}\ \E[M(x)]
\eeq
\eq \label{eq: pNlim}
p_N(x) \ \xrightarrow[N\to\infty]{\textrm{a.s.}}\ \E\big[\log\frac{M(x)}{x}\big] \,-\, \frac{c}{2}\;\E\big[\log\big(1+\frac{M_1(x)}{x}\;\frac{M_2(x)}{x}\big)\big] \;.
\eeq
The law of the random variable $M(x)$ is the only solution supported in $[0,1]$ of the following fixed point distributional equation:
\eq \label{eq: Mfixed}
M\overset{d}{=}\frac{x^2}{x^2+\sum_{i=1}^{\Delta} M_i}
\eeq
where $(M_i)_{i\in\N}$ are i.i.d. copies of $M$ and $\Delta$ is an independent Poisson$(c)$-distributed random variable.
The limit monomer density and the limit pressure density are analytic functions of the activity $x>0$.
\end{theorem}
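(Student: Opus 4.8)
The plan is to use the local weak convergence strategy (the objective method of Aldous--Steele, as developed by Dembo--Montanari for the Ising model) together with the Heilmann--Lieb recursion of Proposition~\ref{prop: HL rec} specialized to the Erd\H{o}s--R\'enyi graph. Since $G_N$ converges locally weakly to a Galton--Watson tree with Poisson$(c)$ offspring distribution (rooted at $o$), the first step is to express the finite-volume monomer density $m_N(x)$ in a way that depends only on a bounded neighbourhood of a uniformly chosen vertex. Using \eqref{eq: gauss repr monomers average}, or more directly the recursion \eqref{eq: HL rec} with a uniform monomer activity $x$ and unit dimer activities, one has for any vertex $i$
\eq \label{eq: pp local ratio}
\langle\alpha_i\rangle_{G_N,x} \,=\, \frac{x\,Z_{G_N-i}}{Z_{G_N}} \,=\, \frac{x^2}{x^2+\sum_{j\sim i} Z_{G_N-i-j}/Z_{G_N-i}} \;,
\eeq
so the monomer probability at $i$ is a function of the ratios $R^{(i)}_j:=Z_{G_N-i-j}/Z_{G_N-i}$ attached to the neighbours $j$ of $i$. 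Iterating \eqref{eq: pp local ratio} on the graph with one or two vertices removed shows that each such ratio satisfies the same recursion \emph{down the tree}, i.e. $R$ at a vertex is $x^2/(x^2+\sum_{\text{children}} R)$; this is precisely the recursive structure whose fixed point is \eqref{eq: Mfixed}.

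Next I would run the standard two-sided argument to pass to the limit. For the \textbf{existence and characterization of the fixed point}: the map $F(\,\cdot\,)$ sending a law supported in $[0,1]$ to the law of $x^2/(x^2+\sum_{i=1}^\Delta M_i)$ is monotone (stochastically increasing in the input, because the sum in the denominator is), so iterating $F$ from the constant $1$ gives a decreasing sequence and from the constant $0$ an increasing sequence; both converge, and one shows the two limits coincide — this is where a contraction or monotonicity-plus-uniqueness estimate is needed, exploiting that the Poisson mean $c$ enters through the random number of terms and that $x>0$ keeps $M$ bounded away from $0$. Uniqueness of the solution supported in $[0,1]$ follows. For the \textbf{convergence of $m_N$}: couple the neighbourhood of a random vertex of $G_N$ to the Galton--Watson tree truncated at depth $k$; the ratio $R^{(i)}_j$ computed on the truncated tree with ``free'' boundary (monomer activity reset to a harmless value at the leaves) differs from the true $R$ by an error controlled by the recursion's contractive behaviour, uniformly in the boundary condition, hence $\to0$ as $k\to\infty$. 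This yields \eqref{eq: mNlim} with $\E[M(x)]$ the root marginal of the fixed point. Self-averaging (Theorem~\ref{thm: self-av}, applied with bounded weights $w_{ij}\in\{0,1\}$ and $x$ fixed, giving the exponential rate) upgrades convergence in expectation to almost-sure convergence.

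For the \textbf{pressure}, I would use the Heilmann--Lieb recursion to write a telescoping/interpolation identity for $\frac1N\log Z_N(x)$ in terms of vertex-removal increments $\log(Z_{G_N}/Z_{G_N-i})$, each of which by \eqref{eq: pp local ratio} equals $\log(x^2+\sum_{j\sim i}R^{(i)}_j)-\log x$; averaging over $i$ and passing to the local limit gives the ``vertex term'' $\E[\log(M(x)/x)]$, and a standard edge-counting correction — each edge is counted from both endpoints, so one subtracts an edge term — produces the $-\tfrac c2\,\E[\log(1+\tfrac{M_1}{x}\tfrac{M_2}{x})]$ contribution, using that the expected number of edges is $\sim cN/2$ and the two half-edge ratios at an independent edge are i.i.d. copies of $M$. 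Care is needed to justify interchanging the limit with the logarithm and to handle the $O(1)$ boundary effects of removing vertices; the bounds \eqref{eq0: p bounds} and uniform integrability (from \eqref{eq: self-av hyp}) take care of this. Finally, \textbf{analyticity} in $x>0$: this follows from Theorem~\ref{thm0: zeros} — for each $N$, $Z_N(x)$ has only purely imaginary zeros, so $\frac1N\log Z_N$ is holomorphic and uniformly bounded on a complex neighbourhood of $(0,\infty)$, and Vitali--Porter gives that the limit $p(x)$ is analytic there (and similarly for $m(x)=x\,p'(x)$).

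The main obstacle I expect is the uniqueness/contraction step for the distributional fixed point \eqref{eq: Mfixed}: unlike the deterministic mean-field recursion, here the number of summands in the denominator is itself random (Poisson$(c)$), so one cannot simply differentiate a one-dimensional map. The right tool is a monotone coupling argument showing the gap between the upper and lower fixed-point iterates shrinks — quantitatively one controls $\E|M^{(k)}_{\uparrow}-M^{(k)}_{\downarrow}|$ using $1-x^2/(x^2+s)$ being Lipschitz in $s$ with a constant that, after taking the Poisson average, gives a genuine contraction as long as $x>0$. Getting this estimate clean, uniformly down the tree and uniformly in the truncation depth, is the technical heart of the argument; everything else is bookkeeping with the Heilmann--Lieb recursion and the local weak convergence of $G_N$.
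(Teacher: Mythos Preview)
Your overall strategy---local weak convergence plus the Heilmann--Lieb recursion---matches the paper's, and your identification of the distributional recursion \eqref{eq: Mfixed} is correct. But there is a genuine gap at exactly the point you flag as ``the technical heart'': your claimed contraction does \emph{not} hold for all $x>0$. The map $s\mapsto x^2/(x^2+s)$ has Lipschitz constant $1/x^2$ on $[0,\infty)$, so after averaging over a Poisson$(c)$ number of i.i.d.\ children the $L^1$-contraction factor is $c/x^2$ (or $c^2/x^4$ if you iterate twice, as the paper does). This is $<1$ only for $x>\sqrt{c}$; for small $x$ the iterates from $0$ and from $1$ need not coincide by any direct metric estimate, and the ``bounded away from $0$'' observation does not rescue the argument.

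The paper closes this gap by a different mechanism: it first proves $M_{\mathrm{even}}(x)=M_{\mathrm{odd}}(x)$ for $x>\sqrt{c}$ via the contraction you describe, and then \emph{analytically continues}. Using the Heilmann--Lieb recursion one shows that for any rooted graph $(G,o)$ the root monomer probability $\mathcal M_z(G,o)$ extends to an analytic function on the half-plane $\mathbb H_+=\{\Re z>0\}$, uniformly bounded by $|z|/\Re z$; hence the limits $M_{\mathrm{even}}(z)$, $M_{\mathrm{odd}}(z)$ are analytic there, and since they agree on $(\sqrt{c},\infty)$ they agree everywhere on $(0,\infty)$ by the identity theorem. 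In your outline analyticity appears only as an afterthought (via Vitali--Porter on $p_N$), whereas in the paper it is the tool that delivers uniqueness of the fixed point for small $x$. A second, smaller difference: for the pressure the paper does not telescope over vertex removals. Instead it integrates---convergence of $m_N(x)/x=\partial_x p_N$ together with the bounds \eqref{eq0: p bounds} gives $p_N\to p$, and then one checks that the explicit expression $\tilde p$ on the right of \eqref{eq: pNlim} satisfies $x\,\tilde p'(x)=\E[M(x)]$ and $\tilde p(x)-\log x\to 0$ as $x\to\infty$, forcing $p=\tilde p$. Your interpolation scheme could in principle be made to work, but it requires controlling the law of the graph after sequential vertex deletions, which is an unnecessary complication.
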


The expression for the pressure on the right hand side of \eqref{eq: pNlim} was provided by Zdeborov\'a and M\'ezard \cite{ZM} via the theoretical physics method of cavity fields. This theorem provides a complete rigorous proof of their conjecture, partially studied in \cite{BLS,S}.

\begin{figure}[h]
\centering
\includegraphics[scale=0.44]{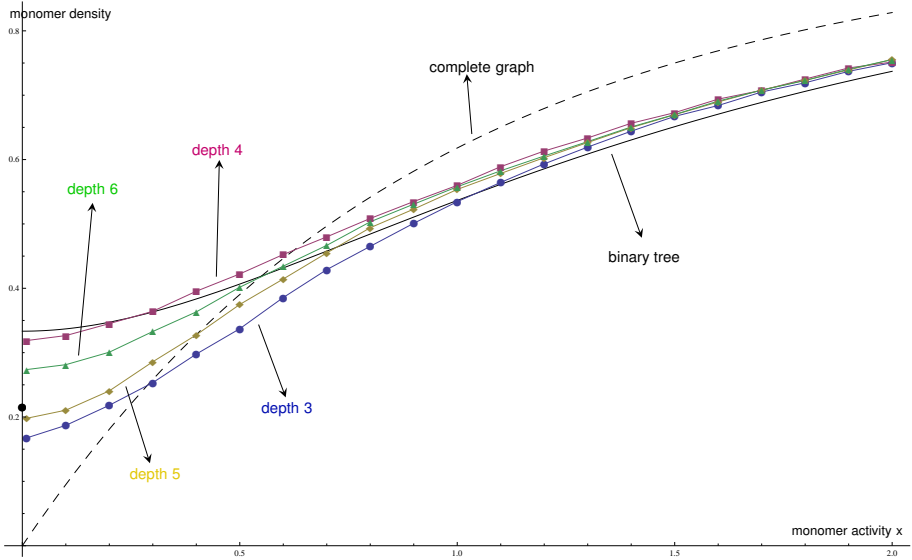}
\caption{Upper (even depths) and lower (odd depths) bounds for the limit monomer density $m(x)=\lim_{N\to\infty}m_N(x)$ versus the monomer activity $x$, in the Erd\H{o}s-R\'enyi case with $c=2$. The binary tree (continuous line) and the complete graph (dashed line) cases are also shown.
The distributional recursion \eqref{eq: Mfixed} is iterated a finite number $r$ of times with initial values $M_i\equiv 1$: the obtained random variable $M_x(r)$ represents the root monomer probability of the random tree $[T,o]_r$. For values of $x=0.01,\,0.1,\,0.2,\,\dots,\,2$, the random variables $M_x(r),\,r=3,\,4,\,5,\,6$ are simulated numerically $10000$ times and an empirical mean is taken in order to approximate $\E[M_x(r)]$. $\E[M_x(r)]$ provides an upper/low approximation of $m(x)$ when $r$ is even/odd.}
\end{figure}

The proof of theorem \ref{teor: treelike} relies on the locally tree-like structure of the Erd\H{o}s-R\'enyi random graphs.
Precisely fix a radius $r\in\N$ and for any vertex $v$ denote by $[G_N,v]_r$ the ball of center $v$ and radius $r$ in the graph $G_N$; then consider a random tree $T$ rooted at the vertex $o$ and with independent Poisson$(c)$-distributed offspring sizes;
it holds (see \cite{DMbraz}):
\eq
\frac{1}{N}\,\sum_{v\in G_N} F\big([G_N,v]_r\big)\ \xrightarrow[n\to\infty]{a.s.}\
\E F\big([T,o]_r\big)
\eeq
for every bounded real function $F$ invariant under rooted graph isomorphisms.

Clearly the monomer density rewrites as an average over the vertices:
\eq \begin{split}
& m_N(x) = \frac{1}{N}\sum_{v\in G_N} \cal M_x(G_N,v) \ ,\quad\text{where} \\
& \cal M_x(G_N,v) := \langle\1(v\textrm{ is a monomer})\rangle_{G_N,x} \;.
\end{split}\eeq
A priori $\cal M_x(G_N,v)$ depends on the whole graph $G_N$, but it can be substituted by local quantities thanks to the following correlation inequalities:
\begin{lemma}[Correlation inequalities]\label{lemma: localisation}
Let $(G,o)$ be a rooted graph, let $r\in\N$. If $[G,o]_{2r+1}$ is a tree, then
\eq
\cal M_x([G,o]_{2r+1}) \,\leq\, \cal M_x(G,o) \,\leq\, \cal M_x([G,o]_{2r}) \;.
\eeq
\end{lemma}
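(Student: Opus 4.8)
The plan is to exploit the Heilmann--Lieb recursion (Proposition \ref{prop: HL rec}) together with the monotonicity structure of the monomer probability as one ``peels off'' the root. The key object is the root monomer probability $\cal M_x(G,o)$. Writing $Z$ for the partition function of the model on $(G,o)$ with uniform monomer activity $x$, and noting that a monomer sits on $o$ exactly when no dimer covers $o$, the recursion \eqref{eq: HL rec} gives
\eq
\cal M_x(G,o) \,=\, \frac{x\,Z_{G-o}}{x\,Z_{G-o} + \sum_{j\sim o}w_{oj}\,Z_{G-o-j}} \,=\, \frac{x^2}{x^2 + \sum_{j\sim o}\frac{Z_{G-o-j}}{Z_{G-o}}\,x^2/\, \text{(something)}} \;,
\eeq
more precisely, with all $w_{oj}=1$ (the case at hand), $\cal M_x(G,o) = x^2\big/\big(x^2 + \sum_{j\sim o} x\,Z_{G-o-j}/Z_{G-o}\big)$. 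The crucial observation is that $x\,Z_{G-o-j}/Z_{G-o}$ is itself (up to the factor $x$) the ratio appearing when one removes $j$ from the graph $G-o$, and in fact one can identify $x^2 Z_{G-o-j}/(x\,Z_{G-o}) $ with a quantity of the form $\cal M_x$ evaluated at the neighbour $j$ in the pruned graph. So the plan is: first, derive the exact identity expressing $\cal M_x(G,o)$ as a decreasing function of the ``cavity'' quantities $\{y_j\}_{j\sim o}$ where $y_j := x\,Z_{(G-o)-j}/Z_{G-o}$, namely $\cal M_x(G,o) = x^2/(x^2 + \sum_{j\sim o} y_j)$; second, observe that each $y_j$ has the same structure one level down — $y_j$ is, up to reindexing, $x \cdot \cal M_x(G-o, j)$ divided by $x^2$... — so that $y_j$ is an \emph{increasing} function of the root monomer probabilities of its own children, which live at distance $2$ from $o$.

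Concretely, I would prove the two-sided bound by induction on $r$, using the tree hypothesis on $[G,o]_{2r+1}$ (resp. $[G,o]_{2r}$). The base cases $r=0$: for the upper bound $\cal M_x(G,o)\le \cal M_x([G,o]_0)$ is immediate since $[G,o]_0$ is the isolated root with $\cal M_x=1$, hardly the sharp statement — so the genuine content is the alternating refinement. Better organized: define for a rooted tree of depth $d$ the truncations obtained by declaring every vertex at depth $d$ to be ``forced monomer'' (cut the tree there) versus ``forced non-monomer''/free. The recursion $\cal M \mapsto x^2/(x^2+\sum_i (\text{child contributions}))$ is monotone \emph{decreasing} in each child contribution, and each child contribution is monotone decreasing in that child's own monomer probability; composing two decreasing maps gives an increasing map, which is the source of the alternation between even and odd depths. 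Thus: replacing $[G,o]$ beyond depth $2r+1$ by its tree-truncation changes $\cal M_x$ monotonically, and one checks the \emph{direction} of the inequality at the leaf level — cutting at odd depth $2r+1$ versus the true value, then cutting at even depth $2r$ — using that the true graph $G$ beyond radius $2r+1$ contributes ``more monomers further down'' than the truncation, or less, consistently with parity.

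The main obstacle, I expect, is the passage from the \emph{tree} statement (where the recursion decouples over disjoint subtrees and the induction is clean) to the \emph{graph} statement $\cal M_x([G,o]_{2r+1}) \le \cal M_x(G,o) \le \cal M_x([G,o]_{2r})$ for a general graph $G$ that is only a tree up to radius $2r+1$ (resp. $2r$): beyond that radius $G$ may have cycles, so the ``children subtrees'' are no longer independent and the clean product/recursion structure fails. The way around this is to not recurse into $G$ at all beyond the tree part: apply the Heilmann--Lieb recursion exactly $2r$ (resp. $2r+1$) times, peeling vertices layer by layer within the tree region; after these peelings the ``remainder'' partition-function ratios $Z_{G-\{\dots\}-j}/Z_{G-\{\dots\}}$ are quantities attached to vertices at depth $2r+1$ (resp. $2r$) in $G$, and there one invokes only the \emph{crude} bounds $0 \le \cal M_x(\text{anything}) \le 1$ from Remark \ref{rk0: pressure bounds} (equivalently $0 \le x\,Z_{H-j}/Z_H \le 1/x$ type bounds). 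Substituting the extreme value $1$ (monomer certain) or the extreme value $0$ at depth $2r+1$ makes $\cal M_x(G,o)$ go down, and substituting at depth $2r$ makes it go up, by the parity-of-composition argument above; and these substituted expressions are \emph{exactly} $\cal M_x([G,o]_{2r+1})$ and $\cal M_x([G,o]_{2r})$ because within the tree region $[G,o]_{2r+1}$ the recursion is identical whether we are in $G$ or in the truncated tree. I would write the monotonicity lemma — that $x^2/(x^2+\sum_j y_j)$ with $y_j = x\, Z_{\cdot - j}/Z_\cdot$ is decreasing in the deeper monomer probabilities and that two such nestings compose to an increasing dependence — as the technical heart, and everything else is bookkeeping of which layer gets which extreme substitution.
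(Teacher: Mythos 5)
Your proposal is correct and follows essentially the same route as the proof this lemma is quoted from (the paper itself states it without proof, deferring to \cite{AC}): express the root monomer probability through the Heilmann--Lieb recursion as $\cal M_x(G,o)=x^2\big/\big(x^2+\sum_{j\sim o}\cal M_x(G-o,j)\big)$, peel exactly the layers inside the tree region so that cycles beyond the ball never enter, use that each level of the recursion is decreasing in the next level's monomer probabilities (so the root depends increasingly on even-depth and decreasingly on odd-depth quantities), bound the last-layer quantities by $1$, and identify the substituted expression with the truncated ball. In a final write-up just fix the bookkeeping slips: $y_j$ equals $\cal M_x(G-o,j)$ exactly (no extra powers of $x$), the sentence claiming $y_j$ is \emph{increasing} in its children's probabilities has the parity wrong (it is decreasing, and the increase only appears after composing two levels), after $k$ peelings the residual quantities sit at depth $k$ (not $k+1$), and the value matching the truncation is $1$ at both parities, the direction of the inequality being decided by the parity of the depth rather than by a choice between substituting $0$ or $1$.
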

Therefore one can deduce that
\eq \label{eq: mNlim1}
m_N(x) \ \xrightarrow[N\to\infty]{a.s.}\ \lim_{r\to\infty} \E \cal M_x\big([T,o]_r\big)
\eeq
provided the existence of the $\lim_{r\to\infty}$. In this way the problem on random graphs is reduce to the study of the root monomer probability on a random tree. As usual in Statistical Mechanics working on trees is much easier since there are no loops in the interactions.

The problem is now approached by means of the Heilmann-Lieb recursion.
By lemma \ref{lemma: localisation}, the sequences of monomer probabilities respectively at even and odd depths of the tree are monotonic:
\eq
\cal M_x\big([T,o]_{2r}\big) \nearrow M_\textrm{even}(x)\ ,\quad
\cal M_x\big([T,o]_{2r+1}\big) \searrow M_\textrm{odd}(x) \quad\text{ as }r\to\infty \;.
\eeq
The relation \eqref{eq: HL rec} for partition functions gives the following relation for root monomer probabilities:
\eq \label{eq: Mrec}
\left(\begin{array}{c} M_\textrm{even}(x) \\ M_\textrm{odd}(x) \end{array}\right)
\,\overset{d}{=}\,
\left(\begin{array}{c} \frac{x^2}{x^2+\sum_{i=1}^\Delta M_\textrm{odd}^{(i)}(x)} \\ \frac{x^2}{x^2+\sum_{i=1}^\Delta M_\textrm{even}^{(i)}(x)} \end{array}\right)
\eeq
where $(M_\textrm{even}^{(i)},\,M_\textrm{odd}^{(i)})$, $i\in\N$, are i.i.d. copies of $(M_\textrm{even},\,M_\textrm{odd})$.
A direct computation from equation \eqref{eq: Mrec} shows that
\eq
\E[|M_\textrm{even}(x)-M_\textrm{odd}(x)|] \,\leq\, \frac{c^2}{x^4}\, \E[|M_\textrm{even}(x)-M_\textrm{odd}(x)|]
\eeq
therefore $M_\textrm{even}(x)=M_\textrm{odd}(x)$ almost surely for every $x>\sqrt{c}\,$.
Now allow the monomer activity to take complex values in $\HH_+ = \{z\in\C\,|\,\Re(z)>0\}$. This has no physical or probabilistic meaning, but it is a technique to obtain powerful results by exploiting complex analysis. Using the Heilmann-Lieb recursion one can prove that for any rooted graph $(G,o)$, the function $\cal M_z(G,o)$ is analytic in $z\in\HH_+$ and is uniformly bounded as $|\cal M_z(G,o)| \leq |z|/\Re(z)$. It follows that the limit functions $M_\textrm{even}(z)$ and $M_\textrm{odd}(z)$ are analytic on $\HH_+$. Therefore by uniqueness of the analytic continuation
\eq
M_\textrm{even}(x)=M_\textrm{odd}(x)=:M(x) \quad \text{almost surely for every }x>0
\eeq
and \eqref{eq: mNlim} follows by \eqref{eq: mNlim1}.
$M(x)$ satisfies the distributional fixed point equation \eqref{eq: Mfixed}.
The solution supported in $[0,1]$ is unique, since for any random variable $M'\in[0,1]$ that satisfies \eqref{eq: Mfixed} it can be shown by induction on $r\in\N$ that
\eq
\cal M_x\big([T,o]_{2r+1}\big) \,\overset{d}{\leq}\, M' \,\overset{d}{\leq}\, \cal M_x\big([T,o]_{2r}\big) \;.
\eeq

These are the ideas to prove the convergence of the monomer density. To complete the theorem \ref{teor: treelike} it remains to prove the convergence of the pressure density.
The convergence of $p_N(x)$ to some function $p(x)$ is guaranteed by the convergence of its derivative $m_N(x)/x$ together with the bounds \ref{eq0: p bounds}.
Call $\tilde p(x)$ the function defined by the right hand side of \eqref{eq: pNlim}, which can be \virg{guessed} by the heuristic method of energy shifts. Direct computations show that $x\,\tilde p'(x) = m(x) = x\,p'(x)$ for every $x>0$ and $\lim_{x\to\infty} \tilde p(x)-\log x = 0 = \lim_{x\to\infty} p(x)-\log x\,$. Therefore $p=\tilde p$.

\subsection{Random Field}
For the class of models described above the randomness is in the graph structure. The model below instead introduces
a randomness in the monomer activities and is useful to describe impurities. Consider the pure hard-core monomer-dimer model defined in \ref{df0: md model} and assume that $G=(V,E)$ is the complete graph with $N$ vertices, the monomer activities $(x_i)_{i\in V}$ are \textit{i.i.d.} positive random variables and the dimer activity is uniform $w_{ij}=w/N>0\; \forall i\neq j\in V$. The partition function is
\eq \label{eq: Z unif w}
Z_N \,=\, \sum_{D\in\DD_N} \big(\,\frac{w}{N}\,\big)^{|D|} \!\!\prod_{i\in M(D)}\!\!x_i \;.
\eeq
Notice that now the partition function and the pressure density $\frac{1}{N}\log Z_N$ are random variables. The first important fact is  that under the assumptions of Theorem \ref{thm: self-av} the pressure density is self-averaging, namely it converges almost surely to its expectation usually called \textit{quenched pressure density}. The Gaussian representation for the partition function \eqref{eq: gauss repr} and a careful application of the Laplace method allows us to find its limiting value. More precisely  the next theorem shows that thermodynamic limit the quenched pressure density exists and is given by a one-dimensional variational principle, which admits a unique solution.

\begin{theorem}[see \cite{ACMrand}] \label{thm: main}
Let $w>0$. Let $x_i>0,\,i\in\N$ be i.i.d. random variables with $\Ex[x]<\infty$ and $\Ex[(\log x)^2]<\infty$. Then:
\eq \label{eq: var princ}
\exists\ \lim_{N\to\infty} \frac{1}{N}\,\Exv[\,\log Z_N] \;=\; \sup_{\xi\geq0}\Phi(\xi) \,\in\R
\eeq
where
\eq \label{eq: Phi}
\Phi(\xi) \,:=\, -\frac{\xi^2}{2w} + \Ex[\,\log(\xi+x)] \quad\forall\,\xi\geq0 \; ,
\eeq
the function $\Phi$ reaches its maximum at a unique point $\xi^*$ which is the only solution in $\,[0,\infty[\,$ of the fixed point equation
\eq \label{eq: fixed point}
\xi \,=\, \Ex\bigg[\frac{w}{\xi+x}\bigg] \;.
\eeq
\end{theorem}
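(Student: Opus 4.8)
The plan is to use the Gaussian representation \eqref{eq: gauss repr} together with the Laplace method on a single auxiliary integration variable. First I would replace the Gaussian vector $\xi$ with covariance $W=(w/N)_{i\neq j}$ by a convenient diagonal choice $w_{ii}=w/N$, so that $W=\frac{w}{N}\mathbf{1}\mathbf{1}^{\mathsf T}$ is rank one and positive semi-definite; then $\xi_i=\sqrt{w/N}\,\eta\,$ with a single standard Gaussian $\eta$ (up to a harmless $O(1/N)$ correction one can instead write $W=\frac{w}{N}\mathbf{1}\mathbf{1}^{\mathsf T}+\text{diag}$, but the rank-one part is what matters). Concretely, using a Hubbard--Stratonovich / Gaussian-linearisation identity one writes
\eq
Z_N \,=\, \sqrt{\frac{N}{2\pi w}}\int_{\R} e^{-\frac{N\xi^2}{2w}} \prod_{i=1}^N (\xi+x_i)\,\d\xi
\eeq
valid for $\xi$ in a region where $\xi+x_i>0$; more carefully one keeps the Gaussian average and shifts contour, but the upshot is an integral representation with large parameter $N$ in the exponent. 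Taking $\frac1N\log$, the integrand is $\exp\big(N[-\frac{\xi^2}{2w}+\frac1N\sum_i\log(\xi+x_i)]\big)$, and by the strong law of large numbers $\frac1N\sum_i\log(\xi+x_i)\to\Ex[\log(\xi+x)]$ a.s., so the exponent converges pointwise to $\Phi(\xi)$.

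Next I would justify the interchange of limit and integral via Laplace's method. The key analytic facts to check are: (i) $\Phi$ is well-defined and finite for $\xi>0$ under the hypothesis $\Ex[(\log x)^2]<\infty$ (which controls $\Ex[|\log(\xi+x)|]$ near $\xi=0$ and the integrability of $\log(\xi+x)$ for large $x$ using $\Ex[x]<\infty$); (ii) $\Phi$ is strictly concave on $(0,\infty)$ because $\xi\mapsto-\frac{\xi^2}{2w}$ is strictly concave and $\xi\mapsto\Ex[\log(\xi+x)]$ is concave, hence $\Phi$ has at most one interior maximiser; (iii) $\Phi(\xi)\to-\infty$ as $\xi\to+\infty$ (the quadratic term dominates) and $\Phi'(0^+)=\Ex[w/x]>0$ when finite, or more robustly $\Phi$ is increasing near $0$, so the supremum over $\xi\geq0$ is attained at an interior point $\xi^*>0$; (iv) setting $\Phi'(\xi^*)=0$ gives exactly the fixed-point equation $\xi^*=\Ex[\tfrac{w}{\xi^*+x}]$, and strict concavity gives uniqueness of the solution in $[0,\infty)$. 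For the upper bound $\limsup\frac1N\Exv[\log Z_N]\leq\sup\Phi$ one uses $\log(\xi+x_i)\leq$ its expectation plus a fluctuation controlled in $L^1$, or more simply Jensen combined with the deterministic bound $\frac1N\sum\log(\xi+x_i)\leq\log(\xi+\frac1N\sum x_i)$; for the lower bound one restricts the integral to a small neighbourhood of $\xi^*$ and uses that on such a neighbourhood $\frac1N\sum_i\log(\xi+x_i)\to\Ex[\log(\xi+x)]$ uniformly (Glivenko--Cantelli-type argument, or monotonicity in $\xi$ plus pointwise convergence à la Dini).

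Finally I would handle the diagonal/positive-definiteness subtlety and the $\Exv$ versus a.s. statement. Choosing $w_{ii}=w/N$ makes $W$ genuinely positive semi-definite with the rank-one structure above; any other admissible choice changes $Z_N$ only through an explicit Gaussian factor that contributes $o(N)$ to $\log Z_N$, so the limit is unaffected. The self-averaging Theorem \ref{thm: self-av} then upgrades convergence of $\frac1N\Exv[\log Z_N]$ to a.s.\ convergence of $\frac1N\log Z_N$ if desired, though the statement only asks for the annealed-looking quantity $\frac1N\Exv[\log Z_N]$, for which a direct dominated-convergence argument after Laplace asymptotics suffices once the tail bounds from $\Ex[x]<\infty$, $\Ex[(\log x)^2]<\infty$ are in place.

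\medskip

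\noindent\textbf{Main obstacle.} The technical heart is the uniform control of $\frac1N\sum_{i=1}^N\log(\xi+x_i)$ near the endpoint $\xi=0$, where $\log(\xi+x_i)$ can be very negative for small $x_i$: one must ensure the Laplace integral is not dominated by contributions from a shrinking neighbourhood of $\xi=0$, and that the maximiser does not escape to the boundary. The hypothesis $\Ex[(\log x)^2]<\infty$ is exactly what is needed to tame these boundary fluctuations (e.g.\ via a second-moment/Chebyshev bound on $\frac1N\sum\log(\xi+x_i)-\Ex[\log(\xi+x)]$), and making that rigorous — together with the interchange of the $N\to\infty$ limit with the integral — is where the real work lies; the concavity and fixed-point parts are then routine calculus.
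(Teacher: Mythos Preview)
Your proposal is correct and follows essentially the same route as the paper: reduce the $N$-dimensional Gaussian representation to a one-dimensional integral via the rank-one covariance $(w/N)_{i,j}$, obtaining $Z_N=\sqrt{N/(2\pi w)}\int_\R e^{-N\xi^2/(2w)}\prod_i(\xi+x_i)\,\d\xi$, and then apply the Laplace method with the random exponent $-\xi^2/(2w)+\frac1N\sum_i\log(\xi+x_i)\to\Phi(\xi)$; the concavity/fixed-point analysis of $\Phi$ is exactly as you describe. One small caveat: your ``more simply Jensen'' alternative for the upper bound, $\frac1N\sum\log(\xi+x_i)\leq\log(\xi+\frac1N\sum x_i)$, yields $\sup_\xi[-\xi^2/(2w)+\log(\xi+\Ex[x])]$ rather than $\sup_\xi\Phi(\xi)$, which is in general strictly larger, so stick with your first suggestion (control the fluctuation of the empirical average) or with a direct Laplace-type bound on the integral.
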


Theorem \ref{thm: main} allows to compute the main macroscopic quantity of physical interest, that is the \textit{dimer density}, in terms of the positive solution $\xi^*$ of the fixed point equation \eqref{eq: fixed point}.

\begin{corollary} \label{cor: smooth}
In the hypothesis of the theorem \ref{thm: main} the limiting pressure per particle
\eq
 p(w):=\lim_{N\to\infty}\frac{1}{N}\,\Exv\big[\log Z_N(w)\big]
\eeq
exists and is a smooth function of $w>0$. Moreover the limiting dimer density
\eq \label{eq: dimer density}
d:=\lim_{N\to\infty}\frac{1}{N}\,\Exv\big[\big\langle\,|D|\,\big\rangle_{\!N\,}\big] \,=\, w\,\frac{\dd\, p}{\dd w} \,=\, \frac{(\xi^*)^2}{2w} \; .
\eeq
\end{corollary}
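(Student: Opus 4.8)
The plan is to read off from Theorem~\ref{thm: main} the representation $p(w)=\Phi(\xi^*(w))$ with $\xi^*(w)$ the unique maximizer (so existence of $p$ is immediate), to prove that $\xi^*$ — and hence $p$ — depends smoothly on $w$, and then to identify $w\,p'(w)$ with the limiting dimer density by combining an envelope computation with a convexity argument that legitimizes the interchange of limit and derivative.

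For the smoothness, set $F(\xi,w):=\xi-\Ex\!\big[\tfrac{w}{\xi+x}\big]$, so that by Theorem~\ref{thm: main} the point $\xi^*(w)$ is the unique zero of $F(\cdot,w)$ on $[0,\infty)$. Although $\Ex[1/x]$ need not be finite, for every $\eta>0$ one has $\tfrac1{\eta+x}\le\tfrac1\eta$ and $0\le\log(1+\eta/x)\le\log(1+\eta)+|\log x|$, so — using $\Ex[(\log x)^2]<\infty$ — the quantities $\Ex[\log(\eta+x)]$ and $\Ex[(\eta+x)^{-k}]$ are finite, with bounds uniform on compact subsets of $\{\eta>0\}$. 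Hence $F$ is $C^\infty$ on $(0,\infty)^2$ (differentiation under the expectation being legitimate) and $\partial_\xi F(\xi,w)=1+\Ex\!\big[\tfrac{w}{(\xi+x)^2}\big]\ge1>0$; moreover $F(\xi,w)<0$ for $\xi$ small and $F(\xi,w)\to+\infty$ as $\xi\to\infty$, so $\xi^*(w)>0$ is an interior critical point of $\Phi$. The implicit function theorem then gives $\xi^*\in C^\infty((0,\infty))$, and therefore $p(w)=-\tfrac{\xi^*(w)^2}{2w}+\Ex[\log(\xi^*(w)+x)]$ is smooth on $(0,\infty)$.

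For the dimer-density formula, write $g(\xi,w)=-\tfrac{\xi^2}{2w}+\Ex[\log(\xi+x)]$, so that $p(w)=g(\xi^*(w),w)$; the stationarity $\partial_\xi g(\xi^*(w),w)=0$ makes the chain rule collapse to the envelope identity
\[ \frac{\dd p}{\dd w}(w)=\partial_w g(\xi,w)\big|_{\xi=\xi^*(w)}=\frac{\xi^*(w)^2}{2w^2},\qquad\text{hence}\qquad w\,\frac{\dd p}{\dd w}(w)=\frac{(\xi^*)^2}{2w}. \]
To see this equals the limiting dimer density, note the finite-volume identity $w\,\partial_w\log Z_N(w)=\langle|D|\rangle_N$, obtained by differentiating the partition function; since $\langle|D|\rangle_N\le N/2$ is bounded, $\partial_w$ may be interchanged with $\Exv$, giving $w\,p_N'(w)=\tfrac1N\Exv[\langle|D|\rangle_N]$, where $p_N(w):=\tfrac1N\Exv[\log Z_N(w)]$. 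Now $\log Z_N$ is the logarithm of a sum of positive exponentials of $|D|\log w$, hence convex in $h:=\log w$, and convexity survives both $\Exv$ and the limit $N\to\infty$, so $h\mapsto p_N(e^h)$ and $h\mapsto p(e^h)$ are convex; since $p_N\to p$ pointwise (Theorem~\ref{thm: main}) and $p$ is differentiable (being smooth), the derivatives of these convex functions converge at each point, i.e. $w\,p_N'(w)\to w\,p'(w)$. Combining, $d=\lim_N\tfrac1N\Exv[\langle|D|\rangle_N]=w\,p'(w)=\tfrac{(\xi^*)^2}{2w}$.

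The integrability and differentiation-under-the-expectation estimates are routine given the hypotheses of Theorem~\ref{thm: main}; the step that needs genuine care is the interchange of the thermodynamic limit with $\partial_w$, which is why I would route it through convexity in $h=\log w$ rather than try to differentiate the variational formula $\sup_{\xi\ge0}\Phi$ uniformly in $N$. (Alternatively, one could use the pressure bounds \eqref{eq0: p bounds} together with a Vitali/Cauchy-estimate argument on a complex neighbourhood of $(0,\infty)$, in the spirit of the absence-of-phase-transition corollary, but the convexity route is shorter.)
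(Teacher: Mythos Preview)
Your argument is correct. The paper itself does not spell out a proof of this corollary (it is stated as a direct consequence of Theorem~\ref{thm: main}, with details deferred to \cite{ACMrand}), so there is no line-by-line comparison to make; your reconstruction supplies exactly the natural details one expects: the implicit function theorem applied to the fixed-point equation \eqref{eq: fixed point} for smoothness of $\xi^*$ and hence of $p$, the envelope identity $p'(w)=\partial_w\Phi(\xi,w)|_{\xi=\xi^*}$ for the formula $w\,p'(w)=(\xi^*)^2/(2w)$, and convexity of $h\mapsto p_N(e^h)$ to pass the $w$-derivative through the thermodynamic limit. The integrability checks you sketch (using $\Ex[(\log x)^2]<\infty$ to control $\Ex[\log(\xi+x)]$ and the trivial bound $(\xi+x)^{-k}\le\xi^{-k}$ for the derivatives) are the right ones, and your observation that $\partial_\xi F\ge1$ forces $\xi^*(w)>0$ to be an interior critical point is the key to applying the implicit function theorem cleanly.
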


A detailed proof of Theorem \ref{thm: main} can be found in \cite{ACMrand}. Here we mention the main ideas.
The first step is to use the Gaussian representation \eqref{eq: gauss repr} for the partition function \eqref{eq: Z unif w} that gives
\eq \label{eq: gauss repr unif w}
Z_N \,=\, \Exi\bigg[\prod_{i=1}^N(\xi+x_i)\bigg] \;,
\eeq
where $\xi$ is a one-dimensional Gaussian random variable with mean $0$ and variance $w/N$. Indeed by proposition \ref{prop: gauss repr}, $Z_N = \E_g \big[\prod_{i=1}^N( g_i+x_i)\big]$ where $g=(g_1,\dots,g_N)$ is an $N$-dimensional Gaussian random vector with mean $0$ and constant covariance matrix $(w/N)_{i,j=1,\dots,N}\,$. It is easy to check that the vector $g$ has the same joint distribution of the  vector $(\xi,\dots,\xi)$ and the identity \eqref{eq: gauss repr unif w} follows. It is important to notice how easily, in this mean-field framework, the Gaussian representation reduces the degrees of freedom of the system. By explicitly rewriting
\eqref{eq: gauss repr unif w} as
\eq  \label{eq: gauss repr unif w 1}
Z_N \,=\, \frac{\sqrt{N}}{\sqrt{2\pi w}}\, \int_\R e^{-\frac{N}{2w}\,\xi^2}\, \prod_{i=1}^N(\xi+x_i) \;\dd\xi  \;.
\eeq
and considering the function
\eq \label{eq: PhiN}
f_N(\xi) \,:=\, e^{-\frac{N}{2w}\,\xi^2}\, \prod_{i=1}^N(\xi+x_i) \quad\forall\,\xi\in\R \;
\eeq
one sees that Theorem \ref{thm: main} follows by approximating $e^\Phi\,$ in the integral \eqref{eq: gauss repr unif w 1}
with the Laplace method.

\section{The mean-field case} \label{sec: meanfield}
Let $h\in\R$ and $J\geq 0$ and consider the  imitative monomer-dimer model in definition \ref{df0: md model-im} within the follwing assumptions: $G=(V,E)$ is the complete graph with $N$ vertices and $\forall i\neq j\in V$  we set $w_{ij}=1/N$,  $x_i\equiv e^h$ and $J_{ij}=J/N$. Since the number of edges is of order $N^2$, in order to keep the logarithm of the partition function of order $N$, a normalisation of the dimer activity as $1/N$ (see Remark \ref{rk0: pressure bounds}) and  the imitation coefficient as $J/N$ are needed.

One can express the Hamiltonian in terms of occupancy variables as
\eq \label{eq: H complete graph}
\Him_N(\alpha) :=\, -h\, \sum_{i=1}^N\,\alpha_i \,-\, \frac{J}{N}\sum_{1\leq i<j\leq N}\big(\alpha_i\,\alpha_j+(1-\alpha_i)\,(1-\alpha_j)\big)
\eeq
for every monomer-dimer configuration on the complete graph $\alpha\in\DD_N$. The partition function is
\eq \label{eq: Z complete graph}
\Zim_N:= \,\sum_{\alpha\in\DD_N} N^{-D_N}\,\exp(-\Him_N) \;,
\eeq
where $D_N:=\sum_{1\leq i<j\leq N}\alpha_{ij}$ represents the total number of dimer for a given configuration $\alpha\in\DD_N$.  Observe that the only relevant quantity in this setting is actually the total number of monomers in a given monomer-dimer configuration
\eq\label{mag}
M_N := \sum_{i=1}^{N}\,\alpha_i
\eeq
indeed the hardcore constraint \eqref{eq0: alphai} implies that $M_N+2D_N=N$ and the Hamiltonian \eqref{eq: H complete graph} is actually a function of $M_N$ only.
We denote the corresponding Gibbs measure as
\eq \label{eq: mu complete graph}
\muim_N(\alpha):= \frac{N^{-D_N(\alpha)}\exp(-\Him_N(\alpha))}{\Zim_N} \quad\forall\,\alpha\in\DD_N
\eeq
and the expectation with respect to the measure $\muim_N$ is denoted by $\langle\,\cdot\,\rangle_N$.
In particular, setting $\mim_N:=\frac{1}{N}\sum_{i=1}^N\alpha_i$, the average monomer density is
\eq \label{eq: m complete graph}
\langle m_N\rangle_N = \,\sum_{\alpha\in\DD_N}\frac{\sum_{i=1}^N\alpha_i}{N}\; \frac{\exp(-\Him_N(\alpha))}{\Zim_N} \,=\,  \frac{\partial}{\partial h}\frac{\log\Zim_N}{N} \;.
\eeq

This model has been initially studied in \cite{ACM,ACMepl}, where the behaviour of the pressure and monomer densities in the thermodynamic limit is analysed.

\begin{theorem}[see \cite{ACM}] \label{thm: imitative limit}
Let $h\in\R,\,J\geq0$. Then there exists
\begin{equation} \label{eq: p limit}
\pim:=\, \lim_{N\to\infty}\frac{\log\Zim_N}{N} \,=\, \sup_{m}\,\psi(m)
\end{equation}
the $\sup$ can be taken indifferently over $m\in[0,1]$ or $m\in\R$ and
\begin{equation} \label{eq: varp}
\psi(m,h,J) :=\, -J m^2 + \frac{J}{2} + \poo(2J m+h-J)
\end{equation}
where for all $t\in\R$
\begin{equation} \label{eq: p0}
\poo(t) :=\, -\frac{1}{2} \big(1-g(t)\big) -\frac{1}{2}\log\big(1-g(t)\big) \;,
\end{equation}
\begin{equation} \label{eq: g}
g(t) \,:=\, \frac{1}{2}\,(\sqrt{e^{4t}+4\,e^{2t}}-e^{2t})  \;.
\end{equation}
Furthermore the function $\psi(m)$ attains its maximum in (at least) one point $m^*=m^*(h,J)\in(0,1)$, which is a solution of the the consistency equation
\begin{equation} \label{eq: consistency equation}
m \,=\, g\big((2m-1)\,J+h\big) \;.
\end{equation}
At each value of the parameters $(h,J)$ such that $h\mapsto m^*(h,J)$ is differentiable, the monomer density admits thermodynamic limit
\begin{equation} \label{eq: m limit}
\lim_{N\to\infty} \langle m_N\rangle_N \,=\, m^* \;.
\end{equation}
\end{theorem}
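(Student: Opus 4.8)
The plan is to reduce the imitative model on the complete graph to a one–dimensional Gaussian integral involving the \emph{pure} hard‑core partition function, and then apply Laplace's method twice. The imitation energy depends only on $M_N=\sum_i\alpha_i$: since $\sum_{i<j}\big(\alpha_i\alpha_j+(1-\alpha_i)(1-\alpha_j)\big)=\tfrac12\big(M_N^2+(N-M_N)^2-N\big)$, one checks that $-\Him_N=N\big[J\mim_N^2+(h-J)\mim_N+\tfrac J2\big]-\tfrac J2$ with $\mim_N=M_N/N$. Assume first $J>0$. Applying the Hubbard–Stratonovich identity $e^{NJ\mim^2}=\sqrt{\tfrac{N}{2\pi}}\int_\R e^{-Nz^2/2+N\sqrt{2J}\,\mim z}\,dz$ and summing over configurations with a fixed number of monomers, one gets
\[ \Zim_N=e^{J(N-1)/2}\sqrt{\tfrac{N}{2\pi}}\int_\R e^{-Nz^2/2}\,\ZclN\big(\sqrt{2J}\,z+h-J\big)\,dz, \]
where $\ZclN(t)$ is the partition function of the pure hard‑core model \eqref{eq0: pf-hami} on $K_N$ with uniform monomer activity $e^t$ and uniform dimer activity $1/N$ (when $J=0$ the model is already pure hard‑core). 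For $\ZclN$ one uses the Gaussian representation in its degenerate uniform form (as around \eqref{eq: gauss repr unif w}): $\ZclN(t)=\Exi[(\xi+e^t)^N]$ with $\xi$ a centred Gaussian of variance $1/N$, i.e. $\ZclN(t)=\sqrt{\tfrac{N}{2\pi}}\int_\R e^{N(\log(u+e^t)-u^2/2)}\,du$, and Laplace's method gives $\tfrac1N\log\ZclN(t)\to\poo(t):=\sup_{u>-e^t}\big(\log(u+e^t)-\tfrac{u^2}{2}\big)$, locally uniformly in $t$. The maximiser $u^\ast$ solves $u(u+e^t)=1$, whence $(u^\ast)^2=1-g(t)$ and $e^tu^\ast=g(t)$ with $g$ as in \eqref{eq: g}; substituting back produces the explicit formula \eqref{eq: p0}. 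One records for later use: $\poo$ is smooth and convex, $(\poo)'=g$ is increasing with range $(0,1)$, $\poo(t)\to-\tfrac12$ as $t\to-\infty$ and $\poo(t)-t\to0$ as $t\to+\infty$, and $\ZclN(t)$ is increasing in $t$.

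Next I feed these asymptotics into the $z$–integral. The lower bound follows by restricting the integration to a neighbourhood of the maximiser of $z\mapsto-\tfrac{z^2}{2}+\poo(\sqrt{2J}z+h-J)$ and using the local uniform convergence of $\tfrac1N\log\ZclN$. For the matching upper bound one splits $\R$ into a large symmetric interval around that maximiser and its complement: on the right tail $\poo(t)$ grows only like $t$, so the factor $e^{-Nz^2/2}$ wins; on the left tail one uses that $\ZclN$ is increasing together with the boundedness of $\poo$ at $-\infty$. One concludes
\[ \pim=\tfrac J2+\sup_{z\in\R}\Big(-\tfrac{z^2}{2}+\poo(\sqrt{2J}z+h-J)\Big), \]
and the substitution $z=\sqrt{2J}\,m$ (a bijection of $\R$ since $J>0$) turns the right‑hand side into $\sup_{m\in\R}\psi(m)$, proving \eqref{eq: p limit}–\eqref{eq: g}. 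Since $\psi'(m)=-2Jm+2Jg(2Jm+h-J)$ has $\psi'(0)=2Jg(h-J)>0$ and $\psi'(1)=2J\big(g(J+h)-1\big)<0$, while $\psi(m)\to-\infty$ as $|m|\to\infty$ (the asymptotic linearity of $\poo$ makes $-Jm^2$ dominate), the maximiser $m^\ast$ is interior, lies in $(0,1)$, and satisfies $\psi'(m^\ast)=0$, which is exactly the consistency equation \eqref{eq: consistency equation}; in particular the suprema over $[0,1]$ and over $\R$ coincide. The case $J=0$ is immediate: $\psi$ is constant in $m$, $\pim=\poo(h)$, and $m^\ast=g(h)\in(0,1)$.

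For the monomer density, write $\langle m_N\rangle_N=\partial_h\big(\tfrac1N\log\Zim_N\big)$. The function $h\mapsto\tfrac1N\log\Zim_N$ is convex (it is the logarithm of a sum of exponentials with exponents affine in $h$), and its limit $\bar p(h)=\sup_m\psi(m,h,J)$ is convex as well, since for each fixed $m$ the map $h\mapsto\psi(m,h,J)$ is $\poo$ composed with a translation plus a constant, hence convex. Convexity plus pointwise convergence forces convergence of the derivatives at every $h$ where $\bar p$ is differentiable, and there the envelope theorem gives $\bar p'(h)=\partial_h\psi(m^\ast,h,J)=g\big((2m^\ast-1)J+h\big)=m^\ast$, the last equality by the consistency equation. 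At a point where $h\mapsto m^\ast(h,J)$ is differentiable, $\bar p$ is differentiable (indeed $\bar p'=m^\ast$ there), so \eqref{eq: m limit} follows.

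The main obstacle is the tail estimate in the Laplace step for the $z$–integral: the crude bound $\tfrac1N\log\ZclN(t)\le t+\tfrac12 e^{-2t}$ (from \eqref{eq0: p bounds}) blows up as $t\to-\infty$ and is useless there, so one must genuinely exploit that the pure‑model pressure $\poo(t)$ remains bounded (it tends to $-\tfrac12$) while the Gaussian weight $e^{-Nz^2/2}$ keeps decaying — unlike the inner Laplace integral for $\ZclN$, whose integrand $e^{N(\log(u+e^t)-u^2/2)}$ decays on its own. This tail control, together with the soft convexity argument that circumvents the otherwise non‑obvious interchange of $\partial_h$ with $\lim_N$, is what the proof really relies on.
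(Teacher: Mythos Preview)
Your proof is correct, but it takes a genuinely different route from the paper's. The paper (following \cite{ACM}) first solves the pure case $J=0$ and then obtains the $J>0$ limit \eqref{eq: p limit} by Guerra's convexity/interpolation argument adapted from the Curie--Weiss model: one linearises the quadratic $m_N^2$ via the elementary inequality $(m_N-m)^2\ge 0$, i.e.\ $m_N^2\ge 2m\,m_N-m^2$, which immediately yields the lower bound $\frac{1}{N}\log\Zim_N\ge \psi(m)+o(1)$ for every trial $m$, and the matching upper bound comes from a complementary convexity estimate. You instead decouple the imitative quadratic by a Hubbard--Stratonovich transform and then run Laplace's method twice. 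Your approach gives an exact integral representation rather than two-sided inequalities, and in fact it is precisely the representation the paper itself uses later (Section~\ref{sec: chuck}, eqs.~\eqref{tesi.lemma.2}--\eqref{funzione.F}) to study fluctuations of $M_N$; so your argument unifies the computation of the pressure with the distributional limit theorems. The price you pay is the tail control on the outer $z$-integral, which you handle correctly via monotonicity of $\ZclN$ on the left and the linear growth of $\poo$ on the right; Guerra's method avoids this issue entirely because it never writes the partition function as an integral. One minor point: writing $\ZclN(t)=\sqrt{N/2\pi}\int_\R e^{N(\log(u+e^t)-u^2/2)}\,du$ is slightly abusive since the original integrand $(u+e^t)^N$ changes sign for $u<-e^t$; the Laplace localisation near $u^\ast>0$ makes this harmless, but it is worth noting. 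Your convexity argument for the monomer density \eqref{eq: m limit} is standard and matches the paper's.
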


In order to prove Theorem \ref{thm: imitative limit}, first we need to deal with the case $J=0$, then the limit \eqref{eq: p limit} with $J>0$ follows by a convexity argument introduced by Guerra \cite{Gue} for the Curie-Weiss model.
At $J=0$ the model reduces to the pure monomer-dimer model of definition \ref{df0: md model} on the complete graph with  $x_i=x=e^h>0$, $w_{ij}=1/N>0\;\forall i\neq j\in V$. Let us denote by $\Zcl_N$ and $\langle m_N\rangle_N^{(0)}$ respectively the partition function and the average monomer density at $J=0$; it holds
\eq \label{eq: convergence0}
\lim_{N\to\infty} \frac{1}{N}\,\log \Zcl_N \,=\, \poo(h)
\eeq
\begin{equation}
\lim_{N\to\infty}\langle m_N\rangle_N^{(0)} \,=\, g(h) \;.
\end{equation}
The function $\poo$ is analytic thus at $J=0$ there are no phase transition in agreement with the general result of Heilman-Lieb \cite{HL}.
The limit \eqref{eq: convergence0} can be obtained in two different ways:
\begin{itemize}
\item[1)] by a combinatorial computation, since on the complete graph it is possible to compute explicitly the number of monomer-dimer configurations with a given number of monomers;
\item[2)] by using the Gaussian representation \eqref{eq: gauss repr} of the partition function and the Laplace method.
\end{itemize}
The latter method furnish a better estimation of the convergence \eqref{eq: convergence0}
\eq \label{eq: convergence0 Lap}
\Zcl_N(h) \,\underset{N\to\infty}{\sim}\, \frac{\exp\big(N\poo(h)\big)}{\sqrt{2-g(h)}} \;,
\eeq
which will be fundamental in the study of the fluctuations of $M_N$ (Section 5).

\begin{remark}
The limiting pressure density $p$ can also be expressed as a different variational problem, equivalent to that of Theorem \ref{thm: imitative limit}:
\begin{equation} \label{eq: varp limit}
p = \sup_m \big( s(m)-\varepsilon(m) \big)
\end{equation}
with
\begin{gather}\label{entropyenergy}
s(m)\,:=\, -m\,\log m -\frac{1-m}{2}\,\log (1-m) + \frac{1+m}{2}\\
\varepsilon(m):= -J\, m^2- (h-J)\,m -\frac{J}{2} \;.
\end{gather}
The variational problem \eqref{eq: varp limit} can be obtained directly by the combinatorial computation mentioned before.
The function $s$ and $\varepsilon$ in \eqref{entropyenergy} are the \textit{entropy and energy densities} respectively.
\end{remark}

The properties of the solution(s) of the one-dimensional variational problem \eqref{eq: p limit} appearing
in theorem \ref{thm: imitative limit} determine the thermodynamic properties of the model.
In particular we are interested in the value(s) of $m=m^*(h,J)$ where the maximum is reached, since it can be interpreted as the limiting value of the monomer density.

The function $m^*$ (see \cite{ACM}) is single-valued and smooth on the plane $(h,J)$ with the exception of an implicitly defined open curve $h=\gamma(J)$ and its endpoint $(h_c,J_c)$. Instead on $\gamma$ there are two global maximum points $m_1<m_2$ that correspond to the \textit{dimer} phase and the \textit{monomer} phase respectively. Crossing the curve $\gamma$ in the phase plane the order parameter $m^*$ presents a jump discontinuity: in other words a \textit{second order phase transition} occurs and $\gamma$ is the coexistence curve.
The point $(h_c,J_c)$ is the \textit{critical point} of the system, where $m^*$ is continuous but not differentiable.

\begin{figure}[h]
\centering
\includegraphics[scale=1]{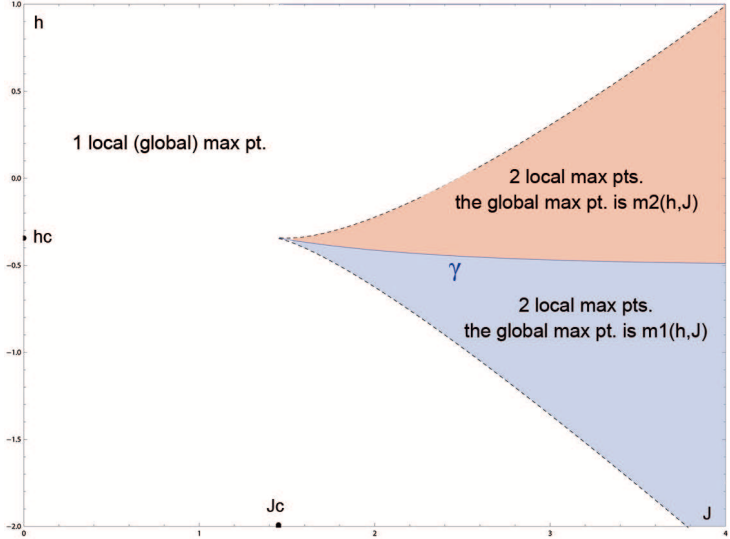}
\caption{Phase space $(h,J)$. The curve $\gamma$ separates the values $(h,J)$ for which the global maximum point $m^*(h,J)$ of $m\mapsto\vpim(m,h,J)$ jumps between two values $m_1<m_2$. This entails a discontinuity of $m^*(h,J)$ along the coexistence curve $\gamma$.}
\end{figure}

\begin{remark}
We notice that the techniques developed in \cite{ACM} do not allow us to conclude the existence of the limiting
monomer density on the coexistence curve $\gamma$.  In the standard mean-field
ferromagnetic model (Curie-Weiss) the existence of the magnetization on the coexistence curve ($h=0$)
follows directly by the global spin flip symmetry, a property that we do not have in the present case.
\end{remark}

The non analytic behaviour of $m^*(h,J)$ near the critical point is described by its \textit{critical exponents}.

\begin{theorem}[see \cite{ACM}] \label{thm: critical exponents}
Consider the global maximum point $m^*(h,J)$ of the function $m\mapsto\vpim(m,h,J)$ defined by (\ref{eq: varp limit}). Set $m_c:=m^*(h_c,J_c)$. The critical exponents of $m^*$ at the critical point $(h_c,J_c)$ are:
\[ \boldsymbol{\beta}\, = \lim_{J\to J_c+}\frac{\log|m^*(\delta(J),J)-m_c|}{\log(J-J_c)}\; =\, \frac{1}{2} \]
along any curve $h=\delta(J)$ such that $\delta\in C^2([J_c,\infty[)$, $\delta(J_c)=h_c$, $\delta'(J_c)=\gamma'(J_c)$ (i.e. if the curve $\delta$ has the same tangent of the coexistence curve $\gamma$ at the critical point);\\
\begin{gather*}
\frac{1}{\boldsymbol{\delta}}\, = \lim_{J\to J_c}\frac{\log|m^*(\delta(J),J)-m_c|}{\log|J-J_c|}\; =\, \frac{1}{3} \\[2pt]
\frac{1}{\boldsymbol{\delta}}\, = \lim_{h\to h_c}\frac{\log|m^*(h,\delta(h))-m_c|}{\log|h-h_c|}\; =\, \frac{1}{3}
\end{gather*}
along any curve $h=\delta(J)$ such that $\delta\in C^2(\R_+)$, $\delta(J_c)=h_c$, $\delta'(J_c)\neq\gamma'(J_c)$ or along a curve $J=\delta(h)$ such that $\delta\in C^2(\R)$, $\delta(h_c)=J_c$, $\delta'(h_c)=0$ (i.e. if the curve is not tangent to $\gamma$ at the critical point).
\end{theorem}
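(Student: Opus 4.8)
\noindent\emph{Proof plan.} The plan is to carry out a Landau--type analysis of the variational function $\vpim(m,h,J)=s(m)-\varepsilon(m)$ near the critical point, exactly as in the Curie--Weiss/Newman--Ellis treatment, the only new feature being the absence of the $m\mapsto -m$ symmetry, which one handles by expanding around $m_c$ rather than around $0$. I would first observe that $\varepsilon(\cdot)$ is a polynomial of degree two in $m$, so the Taylor coefficients of $\vpim$ in $m$ of order $\geq 3$ equal those of $s$ and are independent of $(h,J)$. A direct computation gives $s'''(m)=m^{-2}-\tfrac12(1-m)^{-2}$, which vanishes on $(0,1)$ precisely at $m_c=2-\sqrt2$, and $s''''(m)=-2m^{-3}-(1-m)^{-3}<0$ on all of $(0,1)$. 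Hence, writing $m=m_c+u$,
\[ \vpim(m_c+u,h,J)\,=\,\vpim(m_c,h,J)+A(h,J)\,u+B(h,J)\,u^2+D_0\,u^4+R(u)\;, \]
with $D_0:=s''''(m_c)/24<0$, $R(u)=O(u^5)$ uniformly in $(h,J)$ near $(h_c,J_c)$, $A(h,J)=\partial_m\vpim(m_c,h,J)=s'(m_c)+h+(2m_c-1)J$ and $B(h,J)=\tfrac12\partial_m^2\vpim(m_c,h,J)=\tfrac12 s''(m_c)+J$. The critical point $(h_c,J_c)$ is characterised by $A=B=0$; the affine map $(h,J)\mapsto(A,B)$ is a bijection (Jacobian determinant $1$), and in fact $B(h,J)=J-J_c$ identically, so I shall use $A,B$ as coordinates centred at the critical point.

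\noindent Next I would locate the coexistence curve $\gamma$ in these coordinates. If $(h,J)\in\gamma$ then $\vpim(\cdot,h,J)$ has two equal global maxima at $m_c+u_1<m_c+u_2$ with a local minimum at $m_c+u_0$, $u_0\in(u_1,u_2)$; these $u_1,u_0,u_2$ are the roots of $A+2Bu+4D_0u^3+O(u^4)$ and $B>0$. Integrating this cubic between its outer roots, and using that its root sum is $O(u^4)$, forces to leading order $u_0=0$, i.e. $A=O((J-J_c)^2)$ along $\gamma$. Consequently $\gamma(J)=h_c+(1-2m_c)(J-J_c)+O((J-J_c)^2)$, so $\gamma'(J_c)=1-2m_c$ and $\gamma\subset\{B>0\}=\{J>J_c\}$.

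\noindent Then the two scaling regimes follow. Since $(h,J)\mapsto(A,B)$ is a diffeomorphism and $\gamma$ is tangent to $\{A=0\}$ at the critical point, the hypotheses translate as: along a curve tangent to $\gamma$ one has $A=O((J-J_c)^2)$ while $B=J-J_c$; along a curve transverse to $\gamma$ (which covers both $\delta'(J_c)\neq\gamma'(J_c)$ and, since $\gamma$ has finite slope, a curve $J=\delta(h)$ with $\delta'(h_c)=0$) one has $|A|\asymp\tau$ and $|B|^{3/2}=o(|A|)$, $\tau$ being the distance parameter along the curve. The maximiser $u^*=m^*-m_c$ solves $A+2Bu+4D_0u^3+O(u^4)=0$. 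In the tangent case the dominant balance $2Bu+4D_0u^3=0$ gives $|u^*|=\sqrt{-B/(2D_0)}\,(1+o(1))\asymp(J-J_c)^{1/2}$, the $A$--term and the remainder being of strictly lower order (and the sign of $u^*$, selected by the phase diagram, being irrelevant for the modulus); in the transverse case $|A|\gg|B|^{3/2}$ implies the quartic has a \emph{unique} maximum, and the balance $4D_0u^3=-A$ gives $|u^*|=(|A|/4|D_0|)^{1/3}(1+o(1))\asymp\tau^{1/3}$. Taking logarithms and dividing by $\log(J-J_c)$, respectively $\log|J-J_c|$ or $\log|h-h_c|$, yields $\boldsymbol{\beta}=1/2$ and $1/\boldsymbol{\delta}=1/3$.

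\noindent The hard part will be upgrading these formal dominant balances to a rigorous argument, which amounts to three points: (i) that for $(h,J)$ close enough to $(h_c,J_c)$ the global maximiser of $\vpim(\cdot,h,J)$ over $[0,1]$ is the local critical point near $m_c$, i.e. there is no distant competitor — a structural property of the phase diagram established in \cite{ACM}; (ii) a uniform control of $R$ and $R'$ allowing the dominant balances to be justified by the implicit function theorem applied to the rescaled stationarity equation (or by a direct monotonicity/root-counting argument for its real roots); and (iii) in the tangent case, verifying that the jump of $m^*$ across $\gamma$, when encountered, has size $\asymp(J-J_c)^{1/2}$ and hence does not affect the exponent. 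Each is routine, but together they are precisely what turns the Landau heuristics into a proof.
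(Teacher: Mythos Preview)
The paper under review does not itself prove this theorem: it is a survey article, and Theorem~\ref{thm: critical exponents} is stated with the attribution ``see \cite{ACM}'' and no proof given. So there is no proof in the present paper to compare against. That said, your Landau-type expansion around $m_c$ is correct and is exactly the natural route; the computation $s'''(m)=m^{-2}-\tfrac12(1-m)^{-2}$, its unique zero $m_c=2-\sqrt2$ on $(0,1)$, and $s''''<0$ are right, and since $\varepsilon$ is quadratic the cubic term in $u=m-m_c$ indeed vanishes identically in $(h,J)$. Your identification $B(h,J)=J-J_c$, the affine change $(h,J)\mapsto(A,B)$ with unit Jacobian, and the conclusion $\gamma'(J_c)=1-2m_c$ (so that tangency to $\gamma$ is equivalent to $A=O((J-J_c)^2)$) are all correct. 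The two dominant balances you extract from $A+2Bu+4D_0u^3+R'(u)=0$ give the claimed exponents $\boldsymbol{\beta}=1/2$ and $1/\boldsymbol{\delta}=1/3$, and your list (i)--(iii) of points needed to make the heuristics rigorous is accurate and complete.

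One small sharpening you should make explicit in the tangent case: for an arbitrary $C^2$ curve with $\delta'(J_c)=\gamma'(J_c)$ (not $\gamma$ itself), you should check not only that the two nonzero roots $u_\pm\sim\pm\sqrt{-B/(2D_0)}$ exist, but that the \emph{global} maximum is attained at one of them rather than near $u=0$. This follows by comparing values: $\vpim(m_c+u_\pm)-\vpim(m_c)\sim -B^2/(4D_0)>0$ to leading order $(J-J_c)^2$, while the $A$-contribution is $O((J-J_c)^{5/2})$; hence whichever of $u_\pm$ is selected, $|m^*-m_c|\asymp(J-J_c)^{1/2}$. In the transverse case your condition $|B|^{3/2}=o(|A|)$ is the right one and is automatically satisfied both when $|A|\asymp|B|\asymp\tau$ and in the special case $J=\delta(h)$ with $\delta'(h_c)=0$, where $B=O((h-h_c)^2)$. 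With these checks your argument is complete; the approach in \cite{ACM} proceeds equivalently, typically via a direct expansion of the consistency equation $m=g((2m-1)J+h)$ around $(m_c,h_c,J_c)$, which is just the stationarity condition for $\vpim$ rewritten.
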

Theorem \ref{thm: critical exponents} proves that the model belongs to the same \textit{universality class} of the mean-field
ferromagnet.

\section{Distributional limit theorems at the critical point}\label{sec: chuck}
In this section we study the distributional limit of the random variable \textit{number of monomers} with respect to the Gibbs measure on the complete graph \cite{ACFM,ACFMepl}. We show that a \textit{law of large numbers} holds outside the coexistence curve $\gamma$, whereas on $\gamma$ the limiting distribution is a convex combination of two Dirac deltas representing the two phases (theorems \ref{teo1}, \ref{teorema.1}). Moreover we show that a \textit{central limit theorem} holds outside $\gamma\cup(h_c,J_c)\,$, while at the critical point a normalisation of order $N^{-3/4}$ for the fluctuations is required and the limiting distribution is $Ce^{-cx^4}\dd x$ (theorems \ref{teo1}, \ref{teorema.2}).

In \cite{ACFM} we follow the Gaussian convolution method introduced by Ellis and Newman for the mean-field Ising model (Curie-Weiss) in \cite{ellis1978limit,ellis1978statistics,ellis1980limit} in order to deal with the imitative potential.
An additional difficulty stems from the fact that even in the absence of imitation the system keeps an interacting nature due to the presence  of the hard-core interaction: we use the Gaussian representation \ref{prop: gauss repr} to \textit{decouple} the hard-core interaction.

We focus on the behaviour of the random variable $M_N=\sum_{i=1}^N\alpha_i$ (number of monomers) with respect to the Gibbs measure \eqref{eq: mu complete graph} with a suitable scaling when $N\to\infty\,$.
From now on $\delta_x$ is the Dirac measure centred at $x$, $\mathcal{N}\left(x,\sigma^2\right)$ denotes the Gaussian distribution with mean $x$ and variance $\sigma^2$ and $\overset{\mathcal{D}}{\rightarrow}$ denotes the convergence in distribution with respect to the Gibbs measure $\mu_N$ as $N\to\infty\,$.


At $J=0$ the law of large numbers and the central limit theorem hold true:

\begin{theorem}[see \cite{ACFM}]\label{teo1}
At $J=0$ the following results hold:
\begin{equation}\label{LLN0}
\frac{M_N}{N} \,\overset{\mathcal{D}}{\rightarrow}\, \delta_{g(h)}
\end{equation}
and
\begin{equation}\label{CLT0}
\frac{M_{N}-N\,g(h)}{\sqrt{N}} \,\overset{\mathcal{D}}{\rightarrow}\, \mathcal{N}\left(0,\,\frac{\partial g}{\partial h}(h)\right)
\end{equation}
where $g$ is the function defined by \eqref{eq: g}.
\end{theorem}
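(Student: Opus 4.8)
The plan is to deduce both \eqref{LLN0} and \eqref{CLT0} from the sharp Laplace asymptotics \eqref{eq: convergence0 Lap} for $\Zcl_N$ by an exponential tilting argument. At $J=0$ the Hamiltonian enters through the monomer field only, so replacing $h$ by $h+s$ multiplies the Gibbs weight of every configuration by $e^{s M_N}$; hence for all $s\in\R$
\[
\big\langle e^{s M_N}\big\rangle_N^{(0)} \;=\; \frac{\Zcl_N(h+s)}{\Zcl_N(h)}\,,
\]
and all these moment generating functions are entire since $M_N\in\{0,\dots,N\}$. Thus the theorem reduces to estimating $\Zcl_N(h+s_N)/\Zcl_N(h)$ with $s_N=t/N$ (for the law of large numbers) and with $s_N=t/\sqrt N$ (for the central limit theorem), and for this I would feed the shifted activity into \eqref{eq: convergence0 Lap}.

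For \eqref{LLN0}, taking $s=t/N$ gives, by \eqref{eq: convergence0 Lap},
\[
\big\langle e^{t M_N/N}\big\rangle_N^{(0)} \;=\; \exp\!\Big(N\poo(h+t/N)-N\poo(h)\Big)\,\sqrt{\tfrac{2-g(h)}{2-g(h+t/N)}}\;\big(1+o(1)\big)\;\xrightarrow[N\to\infty]{}\;e^{\,t\,(\poo)'(h)}\,,
\]
using continuity of $g$ and smoothness of $\poo$. One has $(\poo)'(h)=g(h)$ — this is just the limiting monomer density, cf. \eqref{eq: m complete graph} and Theorem \ref{thm: imitative limit}, and it can also be checked by differentiating \eqref{eq: p0} after noting that $g$ inverts $t\mapsto\log g-\tfrac12\log(1-g)$, equivalently $e^{2t}=g^2/(1-g)$. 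Hence $\langle e^{tM_N/N}\rangle_N^{(0)}\to e^{t g(h)}$ for every $t\in\R$, the Laplace transform of $\delta_{g(h)}$, which is \eqref{LLN0}. (In fact \eqref{LLN0} also follows a posteriori from \eqref{CLT0} by Slutsky's lemma, so only the central limit theorem genuinely needs proof.)

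For \eqref{CLT0}, taking $s=t/\sqrt N$ and centring yields
\[
\Big\langle \exp\!\big( t\,\tfrac{M_N-N g(h)}{\sqrt N} \big) \Big\rangle_N^{(0)} = \exp\!\Big(N\poo(h+t/\sqrt N)-N\poo(h)-\sqrt N\,t\,g(h)\Big)\,\sqrt{\tfrac{2-g(h)}{2-g(h+t/\sqrt N)}}\;\big(1+o(1)\big)\,.
\]
Expanding $\poo$ to second order and using $(\poo)'(h)=g(h)$, the $N$-dependent exponent converges to $\tfrac{t^2}{2}\,(\poo)''(h)=\tfrac{t^2}{2}\,g'(h)$, while the square-root prefactor tends to $1$; since $0<g<1$ forces $g'(h)>0$, the right-hand side converges to $\exp\!\big(\tfrac{t^2}{2}\,\partial_h g(h)\big)$, the moment generating function of $\mathcal N\big(0,\partial_h g(h)\big)$. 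As $M_N$ is bounded, convergence of these entire moment generating functions implies convergence in distribution, which is \eqref{CLT0}.

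The step I expect to be the main obstacle is applying \eqref{eq: convergence0 Lap} at the moving activity $h+s_N$: one really needs the Laplace estimate $\Zcl_N(h)=(2-g(h))^{-1/2}\,e^{N\poo(h)}\,(1+o(1))$ with the $o(1)$ uniform for $h$ ranging over a neighbourhood of the fixed value. This is a locally uniform Laplace expansion, obtained in the standard way from the Gaussian representation \eqref{eq: gauss repr} (in the form \eqref{eq: gauss repr unif w 1} with dimer activity $1/N$): writing $\Zcl_N(h)=\sqrt{N/(2\pi)}\int_\R e^{N\varphi_h(\xi)}\,\dd\xi$ with $\varphi_h(\xi)=-\tfrac{\xi^2}{2}+\log(\xi+e^h)$, the function $\varphi_h$ has the unique nondegenerate maximiser $\xi^*(h)=\sqrt{1-g(h)}$, with $\varphi_h(\xi^*(h))=\poo(h)$ and $\varphi_h''(\xi^*(h))=-(2-g(h))$, all smooth in $h$, so the usual second-order expansion around $\xi^*(h)$ is uniform on compact sets of activities. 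Everything else is the elementary algebra of $g$ and $\poo$ recorded above.
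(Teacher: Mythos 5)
Your proposal is correct and follows essentially the same route as the paper: there the moment generating function of the rescaled monomer number is likewise written as the ratio $\Zcl_N(h+t/N^{\eta})/\Zcl_N(h)$ ($\eta=1$ and $\eta=\tfrac12$), and this ratio is controlled by the Gaussian representation together with an extended Laplace method valid when the activity moves with $N$ (precisely the locally uniform version of \eqref{eq: convergence0 Lap} you identify as the key point, with the same stationary point $e^{-h}g(h)=\sqrt{1-g(h)}$ and second derivative $-(2-g(h))$), after which the Taylor expansion of $\poo$ with $(\poo)'=g$ gives \eqref{LLN0} and \eqref{CLT0}. No gaps beyond routine verifications (e.g.\ $g'>0$, which follows from the inversion relation $e^{2h}=g^2/(1-g)$ as you indicate).
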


Notice that, even if $J=0$, \eqref{CLT0} is not a consequence of the standard central limit theorem, indeed $M_N$ is not a sum of i.i.d. random variables because of the presence of the hard-core interaction.
The theorem \ref{teo1} follows from the recent results of Lebowitz-Pittel-Ruelle-Speer \cite{Lebowitz}. A different proof is presented here which includes also the general value of $J>0$. We should mention that a slightly improvement of the result presented has been obtained with different methods in \cite{We}.

Consider the asymptotic behaviour of the distribution of the number of monomers $M_N$ with respect to the Gibbs measure $\mu_N$.
The \textit{law of large numbers} holds outside the coexistence curve $\gamma$, on $\gamma$ instead it breaks down in a convex combination of two Dirac deltas. Precisely it holds
\begin{theorem}[see \cite{ACFM}]\label{teorema.1}
\begin{itemize}
\item[i)] In the uniqueness region $(h,J)\in\big(\R\times\R_+\big)\setminus\gamma$, denoting by $m^*$ the unique global maximum point of the function $\tilde p(m)$ defined by \eqref{eq: varp}, it holds
\begin{equation}\label{pequal1}
	\frac{M_N}{N}\overset{\mathcal{D}}{\rightarrow}\delta_{m^*}
\end{equation}
\item[ii)] On the coexistence curve $(h,J)\in\gamma$, denoting by $m_1,m_2$ the two global maximum points of $\tilde p(m)$, it holds
\begin{equation}\label{pequal2}
	\frac{M_N}{N}\overset{\mathcal{D}}{\rightarrow} \rho_1\,\delta_{m_{1}}+ \rho_2\,\delta_{m_{2}} \;,
\end{equation}
where $\rho_l=\frac{b_{l}}{b_{1}+b_2}\,$, $b_{l}=(-\lambda_l(2-m_l))^{-1/2}\,$  and
$\lambda_l=\frac{\partial^2\widetilde{p}}{\partial m^2}(m_l)\,$, for $l=1,2$.
\end{itemize}
\end{theorem}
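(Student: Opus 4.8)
The plan is to reduce the statement to the asymptotics of the scalar random variable $M_N$ and then to carry out the Gaussian convolution (Hubbard--Stratonovich) argument of Ellis and Newman, as anticipated above. Since $\Him_N$ depends on $\alpha$ only through $M_N$, and since the hard-core identity gives $\sum_{i<j}\big(\alpha_i\alpha_j+(1-\alpha_i)(1-\alpha_j)\big)=M_N^2-NM_N+\frac{N(N-1)}{2}$, the imitative term equals $\frac{J}{N}M_N^2+(h-J)M_N$ up to an $\alpha$-independent constant; hence $\muim_N$ is a quadratic tilt of the pure hard-core measure with activity $e^{h-J}$, and for every bounded continuous $f$
\[
\langle f(M_N/N)\rangle_N \;=\; \frac{\big\langle f(M_N/N)\,e^{\frac{J}{N}M_N^2}\big\rangle^{(0)}_{N,\,h-J}}{\big\langle e^{\frac{J}{N}M_N^2}\big\rangle^{(0)}_{N,\,h-J}}\;,
\]
where $\langle\,\cdot\,\rangle^{(0)}_{N,t}$ denotes the pure hard-core Gibbs average on $K_N$ with uniform activity $e^{t}$ and dimer activity $1/N$. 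Linearising the square by $e^{\frac{J}{N}M_N^2}=\sqrt{\frac{N}{4\pi J}}\int_\R e^{-\frac{N}{4J}y^2+yM_N}\,\d y$ turns numerator and denominator into integrals over an auxiliary variable $y$ of the $J=0$ partition function and observable at the shifted field $h-J+y$:
\[
\langle f(M_N/N)\rangle_N \;=\; \frac{\int_\R e^{-\frac{N}{4J}y^2}\,\Zcl_N(h-J+y)\,\langle f(M_N/N)\rangle^{(0)}_{N,\,h-J+y}\,\d y}{\int_\R e^{-\frac{N}{4J}y^2}\,\Zcl_N(h-J+y)\,\d y}\;.
\]

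Into this identity I would feed the two facts already available for the pure hard-core model on $K_N$: the sharp Laplace asymptotics $\Zcl_N(t)\sim e^{N\poo(t)}/\sqrt{2-g(t)}$ of \eqref{eq: convergence0 Lap}, and the $J=0$ law of large numbers $\langle f(M_N/N)\rangle^{(0)}_{N,t}\to f(g(t))$ from \eqref{LLN0} of Theorem \ref{teo1}, both uniformly for $t$ in compacts. The $y$-integrals then become standard Laplace integrals with rate function $\Psi(y):=\poo(h-J+y)-\frac{y^2}{4J}$. Using $(\poo)'=g$ one checks that the stationary points of $\Psi$ solve $y=2Jg(h-J+y)$, i.e. $y=2Jm$ with $m$ a solution of the consistency equation \eqref{eq: consistency equation}, and that $\Psi(2Jm)=\tilde p(m)-\frac{J}{2}$ and $\Psi''(2Jm)=\tilde p''(m)/(4J^2)$ with $\tilde p$ as in \eqref{eq: varp}. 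Consequently the global maximisers of $\Psi$ are exactly the points $2Jm_l$, with $m_l$ ranging over the global maximisers of $\tilde p$, and since these lie in $(0,1)$ and we are off the critical point they are non-degenerate ($\lambda_l:=\tilde p''(m_l)<0$), so the Laplace method applies with genuine Gaussian peaks. Each $m_l$ contributes to the denominator, and — with the extra factor $f(g(h-J+2Jm_l))=f(m_l)$ — to the numerator, a weight proportional to $(2-g(h-J+2Jm_l))^{-1/2}\,|\Psi''(2Jm_l)|^{-1/2}=(2-m_l)^{-1/2}\,(-\lambda_l/4J^2)^{-1/2}$, i.e. proportional to $b_l=(-\lambda_l(2-m_l))^{-1/2}$; the exponentials $e^{N\Psi(2Jm_l)}$ coincide at all global maximisers and cancel. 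Hence $\langle f(M_N/N)\rangle_N\to\sum_l\rho_l f(m_l)$ with $\rho_l=b_l/\sum_j b_j$, which is statement (ii). In the uniqueness region $(\R\times\R_+)\setminus\gamma$ the function $\tilde p$ has a single global maximiser $m^*$ (possibly a degenerate one at $(h_c,J_c)$, which is still harmless for the Laplace concentration), so $\rho^*=1$ and we get (i). The degenerate case $J=0$ falls outside the convolution identity but is precisely \eqref{LLN0}.

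The main obstacle is the uniform control needed to legitimise the Laplace analysis of the $y$-integral: one must (a) truncate the region of large $|y|$, using the crude bounds \eqref{eq0: p bounds} on $\log\Zcl_N$ so that the Gaussian weight $e^{-Ny^2/4J}$ dominates there; (b) replace $\Zcl_N(h-J+y)$ and $\langle f(M_N/N)\rangle^{(0)}_{N,h-J+y}$ by $e^{N\poo(h-J+y)}(2-g(h-J+y))^{-1/2}$ and $f(g(h-J+y))$ uniformly on a fixed neighbourhood of the (finitely many) maximisers, which requires a uniform version of \eqref{eq: convergence0 Lap} and \eqref{LLN0} together with the analyticity of $g$ and $\poo$; and (c) perform the resulting Gaussian integrals. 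A subsidiary input, used only to know that there are exactly one, respectively two, global maximisers of $\tilde p$ and that they are non-degenerate away from the critical point and along $\gamma$, is the analysis of the consistency equation \eqref{eq: consistency equation} carried out in \cite{ACM}.
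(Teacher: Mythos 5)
Your argument is correct and lands on exactly the right weights, but it is organized differently from the paper's proof, so a comparison is worth making. The paper follows Ellis--Newman literally: it adds an independent Gaussian $W\sim\mathcal N(0,(2J)^{-1})$ and shows that $W/\sqrt N + M_N/N$ has the \emph{exact} density $C_N\exp\big(N\tilde p_N(x)\big)\,\dd x$ with $\tilde p_N(x)=-Jx^2+\tfrac J2+\pclN(2Jx+h-J)$, then performs the Laplace analysis on this density (using the sharp asymptotics \eqref{eq: convergence0 Lap}, uniform convergence of $\pclN$ and its derivatives obtained from the Heilmann--Lieb zeros via Vitali--Porter, a cut-off estimate away from the maximizers, and a dominated-convergence bound near them), and finally removes the Gaussian by a deconvolution lemma for characteristic functions. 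You instead linearize $e^{\frac JN M_N^2}$ by Hubbard--Stratonovich and represent $\muim_N$ restricted to $M_N$ as a mixture over $y$ of pure hard-core measures at field $h-J+y$, with mixing weight $\propto e^{-Ny^2/4J}\,\Zcl_N(h-J+y)$; the Laplace analysis is then done in $y$, and the conditional law is handled by the $J=0$ law of large numbers. The two routes are dual implementations of the same Gaussian decoupling, and crucially both extract the factor $(2-m_l)^{-1/2}$ from the $O(1)$ prefactor in \eqref{eq: convergence0 Lap} and the factor $(-\lambda_l)^{-1/2}$ from the curvature (your identity $\Psi(y)=\tilde p(y/2J)-J/2$ is the same change of variables the paper uses implicitly). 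What your version buys is that no deconvolution lemma is needed; what it costs is that you must upgrade two inputs to locally uniform statements: \eqref{eq: convergence0 Lap} uniformly in the field on a neighbourhood of the maximizers, and the $J=0$ law of large numbers uniformly in the field (the latter follows, e.g., from uniform convergence of $\partial_t\pclN$ and $\partial_t^2\pclN$ on compacts, giving uniform concentration of $M_N/N$ under the $J=0$ measure). You acknowledge both obligations, and they are provable with the same toolbox the paper uses, so I regard this as a complete alternative scheme rather than a gap.

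One technical slip: the truncation step (a) cannot be done with the upper bound in \eqref{eq0: p bounds} alone, because that bound behaves like $\exp\big(\tfrac N2 e^{-2(h-J+y)}\big)$ and therefore beats the Gaussian weight $e^{-Ny^2/4J}$ as $y\to-\infty$. The fix is immediate: since $M_N\geq 0$, the map $t\mapsto\Zcl_N(t)$ is nondecreasing, so $\Zcl_N(h-J+y)\leq\Zcl_N(h-J)$ for $y\leq0$ (and $\leq e^{Ny}\Zcl_N(h-J)$ for $y\geq0$), which makes the tail exponentially negligible compared with $e^{N\max\Psi}$. With that replacement, and with the non-degeneracy of the maximizers off the critical point (and the harmlessness of the quartic degeneracy at $(h_c,J_c)$ for the mere concentration statement in part (i)) taken from the analysis of \eqref{eq: consistency equation} in the reference you cite, your proof goes through.
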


\begin{remark}\label{www}
We notice that, on the contrary of what happens for the Curie-Weiss model, the statistical weights $\rho_{1}$ and $\rho_{2}$  on the coexistence curve  are in general different, furthermore they are not simply given in terms of the second derivative of the variational pressure $\widetilde p\,$.

The first fact can be seen  numerically, 
and analytically one can compute
\begin{equation}\label{boundgamma}
\lim_{J\to \infty}\frac{\rho_1(J)}{\rho_2(J)}=\frac{1}{\sqrt 2} \;.
\end{equation}

The second fact can be interpreted as follows: the relative weights $\rho_l$ have two contributions reflecting the presence of two different kinds of interaction. The first contribution $\lambda_l$ is given by the second derivative of the variational pressure \eqref{eq: varp}, while the second contribution $2-m_l$ comes from the second derivative of the pressure of the pure hard-core model.
\end{remark}


The central limit theorem holds outside the union of the coexistence curve $\gamma$ and the critical point $(h_c,J_c)$. At the critical point its breakdown results in a different scaling $N^{3/4}$ and in a different limiting distribution $Ce^{-cx^4}\dd x\,$. Precisely

\begin{theorem}[see \cite{ACFM}]\label{teorema.2}
\begin{itemize}
\item[i)] Outside the coexistence curve and the critical point $(h,J)\in\big(\R\times\R^+\big)\setminus\big(\gamma\cup(h_c,J_c)\big)$, it holds
\begin{equation}\label{CL1}
\frac{M_{N}-N m^*}{N^{1/2}} \,\overset{\mathcal{D}}{\rightarrow}\, \mathcal{N}\Big(0,\sigma^2\Big)
\end{equation}
where  $\sigma^2=-\lambda^{-1}-(2J)^{-1}>0\,$ and $\lambda=\frac{\partial^2\widetilde{p}}{\partial m^2}(m^*)<0\,$.
\item[ii)] At the critical point $(h_c,J_c)$, it holds
\begin{equation}\label{CLc}
	\frac{M_{N}-N m_c}{N^{3/4}} \,\overset{\mathcal{D}}{\rightarrow}\, C\,\exp\bigg(\dfrac{\lambda_c}{24}\,x^{4}\bigg)\dd x
	\end{equation}
where $\lambda_c=\frac{\partial^4\widetilde{p}}{\partial m^4}(m_c)<0$, $m_c\equiv m^*(h_c,J_c)$
and $C^{-1}=\int_{\R}\exp (\frac{\lambda_c}{24}x^{4})\dd x\,$.
\end{itemize}
\end{theorem}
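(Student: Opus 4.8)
The plan is to reduce Theorem~\ref{teorema.2} to the pure hard-core ($J=0$) model by \emph{linearising} the imitative interaction with a Hubbard--Stratonovich (Gaussian convolution) transformation, in the spirit of Ellis and Newman. Using the hard-core relation $M_N+2D_N=N$ one checks that the Hamiltonian \eqref{eq: H complete graph} equals, up to an additive constant, the quadratic function $-(h-J)M_N-\frac{J}{N}M_N^2$ of the single variable $M_N$; hence the Gaussian identity $e^{\frac{J}{N}M_N^2}=\sqrt{N/(4\pi J)}\int_{\R}e^{-\frac{N}{4J}t^2+tM_N}\,\dd t$ gives, after summing over configurations at fixed $t$,
\eq
\Zim_N \,\propto\, \int_{\R} e^{-\frac{N}{4J}t^2}\,\Zcl_N(h-J+t)\,\dd t \;.
\eeq
More importantly, the same manipulation exhibits $M_N$ under $\muim_N$ as a \emph{mixture}: conditionally on the auxiliary variable (equivalently, after the substitution $t=2Jm$, the variable $m$), the law of $M_N$ is exactly that of the pure hard-core model on the complete graph at shifted monomer field $\widehat h(m):=2Jm+h-J$, while $m$ itself has density proportional to $\Zcl_N(\widehat h(m))\,e^{-NJm^2}$.

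The second ingredient is a sharp analysis of the $J=0$ model. The Gaussian representation of Proposition~\ref{prop: gauss repr} together with the Laplace method yields not only \eqref{eq: convergence0 Lap}, $\Zcl_N(t)\sim e^{N\poo(t)}/\sqrt{2-g(t)}$, but also — uniformly for $t$ in compact sets and with an explicit $O(1/N)$ correction — the local central limit theorem of Theorem~\ref{teo1}: under the pure hard-core Gibbs measure at field $t$ one has $M_N=N g(t)+\sqrt N\,\zeta+o(\sqrt N)$ with $\zeta\sim\mathcal N(0,g'(t))$. Substituting the asymptotics of $\Zcl_N$ into the mixing density turns it into $\propto e^{N\psi(m)}\,(2-g(\widehat h(m)))^{-1/2}$, where $\psi=\widetilde p$ is the variational pressure of \eqref{eq: varp}; recall $(\poo)'=g$, so $\psi'(m)=2J\big(g(\widehat h(m))-m\big)$, whose zeros solve \eqref{eq: consistency equation}.

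It then remains to perform Laplace asymptotics on the mixing measure and to combine the two conditionally independent sources of fluctuation. \emph{Off $\gamma\cup(h_c,J_c)$:} the function $\psi$ has a unique non-degenerate global maximum $m^*$, so $\sqrt N(m-m^*)\to\mathcal N(0,-1/\lambda)$ with $\lambda=\psi''(m^*)=-2J+4J^2 g'(\widehat h^*)<0$ and $\widehat h^*:=\widehat h(m^*)$; since $g(\widehat h^*)=m^*$ by \eqref{eq: consistency equation}, the linearisation $Ng(\widehat h(m))-Nm^*\approx 2NJg'(\widehat h^*)(m-m^*)$ together with the independent conditional fluctuation $\zeta\sim\mathcal N(0,g'(\widehat h^*))$ produces a Gaussian limit whose variance is $\frac{(2Jg'(\widehat h^*))^2}{-\lambda}+g'(\widehat h^*)=-\lambda^{-1}-(2J)^{-1}=\sigma^2$ (and $\sigma^2>0\Leftrightarrow\lambda<0$), which is \eqref{CL1}. \emph{At $(h_c,J_c)$:} the maximum is degenerate exactly in the way already analysed for Theorem~\ref{thm: critical exponents}, namely $\psi'(m_c)=\psi''(m_c)=\psi'''(m_c)=0$ and $\psi''''(m_c)=\lambda_c<0$; hence the mixing density concentrates on a window of width $N^{-1/4}$, and with $m=m_c+N^{-1/4}x$ one has $e^{N\psi(m)}\sim e^{N\psi(m_c)}e^{\frac{\lambda_c}{24}x^4}$. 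Moreover the degeneracy $\psi''(m_c)=0$ says precisely $2J_c g'(\widehat h_c)=1$, i.e. $\frac{\dd}{\dd m}g(\widehat h(m))\big|_{m_c}=1$, so on that window $N g(\widehat h(m))=Nm_c+N(m-m_c)+O(N^{1/2})$ and therefore $M_N-Nm_c=N(m-m_c)+O(N^{1/2})+\sqrt N\,\zeta$. The $O(\sqrt N)$ terms are negligible against $N(m-m_c)=N^{3/4}x$, so $\frac{M_N-Nm_c}{N^{3/4}}$ converges to the law with density $\propto e^{\frac{\lambda_c}{24}x^4}$ — the smooth prefactor $(2-g(\widehat h_c))^{-1/2}$ being an overall constant absorbed into $C$ — which is \eqref{CLc}.

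The main difficulty is in making the two inputs quantitative and uniform, and in the two-scale bookkeeping at criticality. First, the mixture identity is only as strong as the control on $\Zcl_N(t)$: one needs \eqref{eq: convergence0 Lap} with an explicit error term (an Edgeworth/local-CLT refinement of the Laplace estimate obtained from the Gaussian representation), uniformly for $t$ in a fixed neighbourhood of the relevant field — enough, since at criticality $\widehat h(m)$ stays within $O(N^{-1/4})$ of $\widehat h_c$. Second, at the critical point one must rigorously exclude any interference between the $O(\sqrt N)$ conditional fluctuations of $M_N$ and the $O(N^{3/4})$ fluctuations of the mixing variable; the clean route is to compute $\langle\exp(s(M_N-Nm_c)/N^{3/4})\rangle_N$ directly from the double integral, factor out the Gaussian contribution of $\zeta$, show the remainder converges to $\int_{\R}e^{sx+\frac{\lambda_c}{24}x^4}\dd x\,\big/\int_{\R}e^{\frac{\lambda_c}{24}x^4}\dd x$, and then upgrade to convergence in distribution with a tightness estimate. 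The purely deterministic inputs — non-degeneracy of $m^*$ off $\gamma\cup(h_c,J_c)$, and $\psi''=\psi'''=0$, $\psi''''<0$ at $(h_c,J_c)$ — are calculus on the explicit functions $g$ and $\poo$, already carried out in the analysis behind Theorems~\ref{thm: imitative limit} and \ref{thm: critical exponents}.
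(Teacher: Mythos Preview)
Your proposal is correct and rests on the same Gaussian linearisation of the imitative term as the paper, but you package it differently. The paper does not use a mixture representation: it introduces an \emph{independent} Gaussian $W\sim\mathcal N(0,(2J)^{-1})$, computes directly that the density of $\frac{W}{N^{1/2-\eta}}+\frac{M_N-Nu}{N^{1-\eta}}$ is $\propto\exp\big(N\tilde p_N(xN^{-\eta}+u)\big)$ with $\tilde p_N(x)=-Jx^2+p^{(0)}_N(2Jx+h-J)$, proves the relevant pointwise and dominated-convergence estimates for this density, and then \emph{removes} $W$ at the end via characteristic functions (the elementary fact that if $W_N\to W$ with non-vanishing $\E e^{itW}$ and $W_N+Y_N\to W+Y$, then $Y_N\to Y$). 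In particular the paper never invokes the $J=0$ CLT as a conditional input; it only needs the scalar asymptotics \eqref{eq: convergence0 Lap} for $Z^{(0)}_N$ and uniform convergence of the derivatives of $p^{(0)}_N$ on compacts.

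Your route is the dual picture: the Hubbard--Stratonovich variable $m$ you introduce satisfies $Nm\stackrel{d}{=}\sqrt N\,W+M_N$, so your ``two sources of fluctuation'' (mixing variable $m$ and conditional $\zeta$) are exactly the paper's decomposition $M_N=Nm-\sqrt N\,W$ read backwards. What you gain is a transparent heuristic for the variance formula and for the $N^{3/4}$ scaling (your computations $\sigma^2=(2Jg')^2(-\lambda)^{-1}+g'$ and $2J_cg'(\widehat h_c)=1$ are clean and correct). What it costs is precisely what you flag at the end: you need a \emph{uniform} local CLT for the $J=0$ model over a shrinking window of fields, and a careful justification that the conditional $O(\sqrt N)$ piece and the $O(N^{3/4})$ mixing piece decouple in the limit. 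The paper's add--then--subtract device sidesteps both issues, since after adding $W$ one is left with a single explicit density in one real variable and a standard Laplace/dominated-convergence argument. One small slip: your parenthetical ``$\sigma^2>0\Leftrightarrow\lambda<0$'' is not a biconditional; what is true (and what the paper checks) is that $\lambda<0$ together with $g'>0$ forces $-2J<\lambda<0$, hence $\sigma^2=-\lambda^{-1}-(2J)^{-1}>0$.
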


The first step to obtain these results is to perform a Gaussian convolution, following the ideas of Ellis and Newmann \cite{ellis1978limit, ellis1978statistics}, in order to decouple the imitative interaction.
Precisely taking $W \sim \mathcal{N}(0,(2J)^{-1})$ a random variable independent of $M_{N}$ for all $N\in\N$, for all $\eta\geq0$ and $u\in\R$, a direct computation shows that the distribution of
\begin{equation}
\frac{W}{N^{1/2 -\eta}}+\frac{M_{N}-N u}{N^{1-\eta}}
\end{equation}
is
\begin{equation}\label{tesi.lemma.2}
C_N\,\exp\Big(N\,\tilde p_N\Big(\dfrac{x}{N^{\eta}}+u\Big)\Big)\,dx \,,
\end{equation}
where $C_N^{-1}=\int_{\mathbb{R}}\exp\big(N\,\tilde p_N(\frac{x}{N^{\eta}}+u)\big)dx\,$,
\begin{equation}\label{funzione.F}
\tilde p_N(x):=-Jx^2+\frac{J}{2}+\pclN(2Jx+h-J)
\end{equation}
and $\pclN(t)=\frac{1}{N}\log\ZclN(t)$ denotes the pressure density of the monomer-dimer system at imitation potential $J=0$ and monomer field $t$.
Therefore we are interested in the limit as $N\to\infty$ of quantities like
\eq \label{eq: intphi}
\int_\R \exp\Big(N\,\tilde p_N\Big(\dfrac{x}{N^{\eta}}+u\Big)\Big)\ \phi(x)\,dx \quad,\quad\phi\text{ bounded continuous}
\eeq
which depends crucially on the scaling properties of $\tilde p_N$ near its global maximum point(s).
Thanks to the Gaussian representation at $J=0$, and precisely from eq.\eqref{eq: convergence0 Lap}, we know that $\tilde p_N$ converges to $\tilde p$ in a very strong way, which allows to replace the Taylor expansion of $\tilde p_N$ by that of $\tilde p$.

\section{Conclusions and outlooks}
The relation of the class of models presented so far with the physically relevant ones in finite dimensional
lattices represent an interesting research problem that can be carried out following the steps of the studies
done for the ferromagnetic spin models \cite{Kac, Tho}. We want to point out, moreover, that the range
of direct applications of mean-field models like these ones is quite developed and quickly
expanding. To make a few examples: the diluted mean-field case studied in Section 3 is
directly related to the matching problem studied in computer sciences \cite{KS}. The model
with attractive interaction studied in Section 5 has been applied to the social sciences \cite{BCSV}.
There is also a growing set of applications of monomer-dimer models to the study of
socio-technical data from novel communication systems like voip conference calls and
messaging \cite{ACMM}. At each single time every user cannot be in more than a call, i.e. the occupation
number fullfills a hard-core constraint. While the old style phone calls were well described
by a monomer-dimer system the novel technological devices needs a wider space of higher
dimensional polymers that allow the presence of multiple individuals in the same virtual room:
the monomer correspond to a silent user, the dimer is a two-body conversation, the trimer a three-body
and so on. The models to be investigated in this case are therefore polymer models with hard-core
interaction on hypergraphs with no physical dimension, i.e. better described as some form of dilution
of the complete hypergraph. The mean-field case and its diluted versions are therefore at the heart of
the problem and not only mere approximations.


\begin{thebibliography}{0}



\bibitem{Alb} D. Alberici, ``A cluster expansion approach to the Heilmann-Lieb liquid crystal model'', \textit{Journal of Statistical Physics} 162(3), 761-791 (2016)

\bibitem{AC}
D. Alberici, P. Contucci, ``Solution of the monomer-dimer model on locally tree-like graphs. Rigorous results'', \textit{Communications in Mathematical Physics} \textbf{331}, 975-1003 (2014)

\bibitem{ACFM}
D. Alberici, P. Contucci, M. Fedele, E. Mingione, ``Limit theorems for monomer-dimer mean-field models with attractive potential'', \textit{Communications in Mathematical Physics} \textbf{346}, 781-799 (2016)

\bibitem{ACFMepl}
D. Alberici, P. Contucci, M. Fedele, E. Mingione, ``Limit theorems for monomer-dimer mean-field models with attractive potential'', \textit{Europhysics Letters} \textbf{114}, 10006 (2016)

\bibitem{ACMrand}
D. Alberici, P. Contucci, E. Mingione, ``A mean-field monomer-dimer model with randomness. Exact solution and rigorous results'', \textit{Journal of Statistical Physics} \textbf{160}, 1721-1732 (2015)

\bibitem{ACM}
D. Alberici, P. Contucci, E. Mingione, ``A mean-field monomer-dimer model with attractive interaction. Exact solution and rigorous results'', \textit{Journal of Mathematical Physics} \textbf{55}, 1-27 (2014)

\bibitem{ACMepl}
D. Alberici, P. Contucci, E. Mingione, ``The exact solution of a mean-field monomer-dimer model with attractive potential'', \textit{Europhysics Letters} \textbf{106}, 1-5 (2014)

\bibitem{ACMM}
D. Alberici, P. Contucci, E. Mingione, M. Molari, ``Aggregation models on hypergraphs'', to appear on \textit{Annals of Physics}


\bibitem{A} D. Aldous, J.M. Steele, ``The objective method: probabilistic combinatorial optimization and local weak convergence'', \textit{Encyclopaedia of Mathematical Sciences} \textbf{110}, 1-72 (2004)

\bibitem{BCSV}
A. Barra, P. Contucci, R. Sandell, C. Vernia, ``An analysis of a large dataset on immigrant integration in Spain. The statistical mechanics perspective on social action'', \textit{Scientific Reports} \textbf{4}, 4174 (2014)


\bibitem{BN} M. Bayati, C. Nair, ``A rigorous proof of the cavity method for counting matchings'', \textit{Proceedings of the 44th Annual Allerton Conference on Communication, Control and Computing} (2006)

\bibitem{vdB} J. van den Berg, \virg{On the absence of phase transition in the monomer-dimer model}, \textit{Perplexing Problems in Probability}, 185-195 (1999)




\bibitem{BLS}
C. Bordenave, M. Lelarge, J. Salez, ``Matchings on infinite graphs'', \textit{Probability Theory and Related Fields} \textbf{157}, 183-208 (2013)




\bibitem{BCFVV}
R. Burioni, P. Contucci, M. Fedele, C. Vernia, A. Vezzani, ``Enhancing participation to health screening campaigns by group interactions'', \textit{Scientific Reports} \textbf{5}, 9904 (2015)

\bibitem{Ch} Chang T.S., ``Statistical theory of the adsorption of double molecules'', \textit{Proceedings of the Royal Society of London A} \textbf{169}, 512-531 (1939)

\bibitem{Chcambr} Chang T.S., ``The number of configurations in an assembly and cooperative phenomena'', \textit{Proceedings of the Cambridge Philosophical Society} \textbf{38}, 256-292 (1939)


\bibitem{DM} A. Dembo, A. Montanari, ``Ising models on locally tree-like graphs'', \textit{Annals of Applied Probability} \textbf{20}, 565-592 (2010)

\bibitem{DMbraz} A. Dembo, A. Montanari, ``Gibbs measures and phase transitions on sparse random graphs'', \textit{Brazilian Journal of Probability and Statistics} \textbf{24}, 137-211 (2010)

\bibitem{DMfact} A. Dembo, A. Montanari, N. Sun, ``Factor models on locally tree-like graphs'', \textit{Annals of Probability} \textbf{41}(6), 4162-4213 (2013)




\bibitem{DG} M. Disertori, A. Giuliani, \virg{The nematic phase of a system of long hard rods}, \textit{Communications in Mathematical Physics} \textbf{323}(1), 143-175 (2013)



\bibitem{ellis1978statistics}
R.S. Ellis, C.M. Newman, ``The statistics of Curie-Weiss models'', {\it Journal of Statistical Physics} {\bf 19}, 149-161 (1978)

\bibitem{ellis1978limit}
R.S. Ellis, C.M. Newman, ``Limit theorems for sums of dependent random variables occurring in statistical mechanics'', {\it Probability Theory and Related Fields} {\bf 44}, 117-139 (1978)

\bibitem{ellis1980limit}
R.S. Ellis, C.M. Newman, J.S. Rosen, ``Limit theorems for sums of dependent random variables occurring in statistical mechanics'', {\it Probability Theory and Related Fields} {\bf 51}, 153-169 (1980)


\bibitem{F} M.E. Fisher, ``Statistical mechanics of dimers on a plane lattice'', Physical Review \textbf{124}(6), 1664-1672 (1961)

\bibitem{FR} R.H. Fowler, G.S. Rushbrooke, ``An attempt to extend the statistical theory of perfect solutions'', \textit{Transactions of the Faraday Society} \textbf{33}, 1272-1294 (1937)





\bibitem{GJL} A. Giuliani, I. Jauslin, E.H. Lieb, \virg{A Pfaffian formula for monomer-dimer partition functions}, Journal of Statistical Physics \textbf{163}(2) (2016)

\bibitem{GL} A. Giuliani, E.H. Lieb, \virg{Columnar phase in quantum dimer models}, \textit{Journal of Physics A: Mathematical and Theoretical} \textbf{48}(23) (2015)







\bibitem{Gue} F. Guerra, ``Mathematical aspects of mean field spin glass theory'', \textit{4th European Congress of Mathematics, Stockholm June 27-July 2 2004}, ed. A. Laptev, European Mathematical Society, 2005


\bibitem{HL} O.J. Heilmann, E.H. Lieb, ``Theory of monomer-dimer systems'', \textit{Communications in Mathematical Physics} \textbf{25}, 190-232 (1972)

\bibitem{HLprl} O.J. Heilmann, E.H. Lieb, ``Monomers and dimers'', \textit{Physical Review Letters} \textbf{24}, 1412-1414 (1970)

\bibitem{HLliquid} O.J. Heilmann, E.H. Lieb, ``Lattice models for liquid crystals'', \textit{Journal of Statistical Physics} \textbf{20}, 680-693 (1979)

\bibitem{Kac} M. Kac, C.J. Thompson, ``Critical Behavior of Several Lattice Models with Long-Range Interaction'',\textit{J. Math. Phys.}, \textbf{10}, 1373 (1969)







\bibitem{KS} R. Karp, M. Sipser, ``Maximum matchings in sparse random graphs'', \textit{Proceedings of the 22nd Annual Symposium on Foundations of Computer Science}, IEEE Computer Society Press, 364-375 (1981)

\bibitem{K} P.W. Kasteleyn, ``The statistics of dimers on a lattice. I. The number of dimer arrangements on a quadratic lattice'', \textit{Physica} \textbf{27}(12), 1209-1225 (1961)







\bibitem{Lebowitz}
J.L. Lebowitz, B. Pittel, D. Ruelle, E.R. Speer, ``Central limit theorems, Lee-Yang zeros, and graph-counting polynomials'', preprint \textit{arxiv:1408.4153} (2014)



\bibitem{Ltra} E.H. Lieb, ``The solution of the dimer problems by the transfer matrix method'', \textit{Journal of Mathematical Physics} \textbf{8}, 2339-2341 (1967)







\bibitem{O} L. Onsager, \virg{The effects of shape on the interaction of colloidal particles}, \textit{Annals of the New York Academy of Sciences} \textbf{51}, 627-659 (1949)




\bibitem{Pei} R. Peierls, \virg{On Ising's model of ferromagnetism}, \textit{Mathematical Proceedings of the Cambridge Philosophical Society} \textbf{32}(3), 477-481 (1936)

\bibitem{Rob} J.K. Roberts, ``Some properties of mobile and immobile adsorbed films'', \textit{Proceedings of the Cambridge Philosophical Society} \textbf{34}, 399-411 (1938)





\bibitem{S} J. Salez, ``Weighted enumeration of spanning subgraphs in locally tree-like graphs'', \textit{Random Structures and Algorithms} \textbf{43}, 377-397 (2013)





\bibitem{TFK} H.N.V. Temperley, M.E. Fisher, ``Dimer problem in statistical mechanics - An exact result'', \textit{Philosophical Magazine} \textbf{6}(68), 1061-1063 (1961)


\bibitem{Tho} C.J., Thompson, ``Ising Model in the High Density Limit'', \textit{ Commun. Math. Phys.}, \textbf{36}, 255262 (1974)


\bibitem{V} I.G. Vladimirov, ``The monomer-dimer model and Lyapunov exponents of homogeneous Gaussian random fields'', \textit{Discrete and Continuous Dynamical Systems B} \textbf{18}, 575-600 (2013)


\bibitem{We} W.-K. Chen, ``Limit Theorems in the Imitative Monomer-Dimer Mean-Field Model via Stein's Method'', \textit{ J. Math. Phys.},  \textbf{57}, 083302 (2016)


\bibitem{ZM} L. Zdeborov\'a, M. M\'ezard, ``The number of matchings in random graphs'', \textit{Journal of Statistical Mechanics} \textbf{5}, P05003 (2006)

\end{thebibliography}
\end{document}